\documentclass{llncs}

\usepackage{graphicx,color}
\usepackage{tikz}
\usepackage{amsmath,mathrsfs,amsfonts,latexsym,amssymb,verbatim}
\usepackage[mathcal]{euscript}

\pagestyle{plain}

\newcommand{\VC}[1]{}

\newcommand{\couic}[1]{}

\newcommand{\vt}{V}
\newcommand{\dom}{\textrm{dom}}
\newcommand{\ports}{\pi}
\newcommand{\port}{{:}}%{\!:\!}

\newcommand{\tili}[1]{\widetilde{#1}}

\def\N{\mathbb{N}}

\begin{document}

\title{Reversible Causal Graph Dynamics: Invertibility, Vertex-preservation, Block representation}

\author{Pablo Arrighi\inst{1} \and Simon Martiel\inst{2} \and Simon Perdrix\inst{3}}
\institute{
Aix-Marseille Univ., LIF UMR CNRS 7279, Marseille and IXXI, Lyon, France\\
\email{pablo.arrighi@univ-amu.fr}, 
\and
INRIA Saclay, Inria Project Team DEDUCTEAM, LSV Cachan, Cachan, France\\
\email{martiel@lsv.ens-cachan.fr}
\and
CNRS, LORIA, Inria Project Team CARTE, Univ. de Lorraine, Nancy, France\\
\email{simon.perdrix@loria.fr}
}

\maketitle

\begin{abstract}
Causal Graph Dynamics extend Cellular Automata to arbitrary, bounded-degree, time-varying graphs. The whole graph evolves in discrete time steps, and this global evolution is required to have a number of physics-like symmetries: shift-invariance (it acts everywhere the same) and causality (information has a bounded speed of propagation). We add a further physics-like symmetry, namely reversibility.\\{\bf Keywords.} Bijective, invertible, injective, surjective, one-to-one, onto, Cayley graphs, Hedlund, Block representation, Lattice-gas automaton, Reversible Cellular Automata.
\end{abstract}

\section{Introduction}

Cellular Automata (CA) consist in a $\mathbb{Z}^n$ grid of identical cells, each of which may take a state among a finite set $\Sigma$. Thus the configurations are in $\Sigma^{\mathbb{Z}^n}$. The state of each cell at time $t+1$ is given by applying a fixed local rule $f$ to the cell and its neighbours, synchronously and homogeneously across space. CA constitute the most established model of computation that accounts for euclidean space. They are widely used to model spatially distributed computation (self-replicating machines, synchronization problems\ldots), as well as a great variety of multi-agents phenomena (traffic jams, demographics\ldots).
But their origin lies in Physics, where they are commonly used to model waves or particles. And since small scale physics is understood to be reversible, it was natural to endow them with another, physics-like symmetry: reversibility. The study of Reversible CA (RCA) was further motivated by the promise of lower energy consumption in reversible computation. RCA have turned out to have a beautiful mathematical theory, which relies on topological and algebraic characterizations in order to prove that the inverse of a CA is a CA \cite{Hedlund}. The other main result is that any RCA can be expressed as a finite-depth circuits of local reversible permutations or `blocks' \cite{KariBlock,KariCircuit,Durand-LoseBlock}. If one considers that physics is reversible, this entails that RCA can indeed be implemented by physically acceptable local mechanisms --- a fact which cannot be seen directly from their local rule $\Sigma^r\to\Sigma$ description, which by definition is not injective.

Causal Graph Dynamics (CGD) \cite{ArrighiCGD,ArrighiIC,ArrighiCayleyNesme}, on the other hand, deal with a twofold extension of CA. First, the underlying grid is extended to being an arbitrary -- possibly infinite --  bounded-degree graph $G$.  Informally, this means that each vertex of the graph may take a state among a finite set $\Sigma$, so that configurations are in $\Sigma^{V(G)}$, whereas the edges of the graph represent the locality of the evolution: the next state of a vertex depends only upon the states of the vertices which are at distance at most $k$, i.e. in a disk of radius $k$,  for some fixed integer $k$. Second, the graph itself is allowed to evolve over time. 
 %Informally, this means having configurations in $\Sigma^{V(G)}$. Second, the graph itself is allowed to evolve over time. 
	Informally, this means that configurations are in the union of $\Sigma^{V(G)}$ for all possible bounded-degree graph $G$, i.e. $\bigcup_G\Sigma^{V(G)}$. This twofold generalization has led to a model where the local rule $f$ is applied synchronously and homogeneously on every possible subdisk of the input graph, thereby producing small patches of the output graphs, whose union constitutes the output graph. Figure \ref{fig:CGDIdea} illustrates the concept of these CA over graphs.

\begin{figure}[h]
\begin{center}
\vspace{-0.5cm}
\includegraphics[scale=.3]{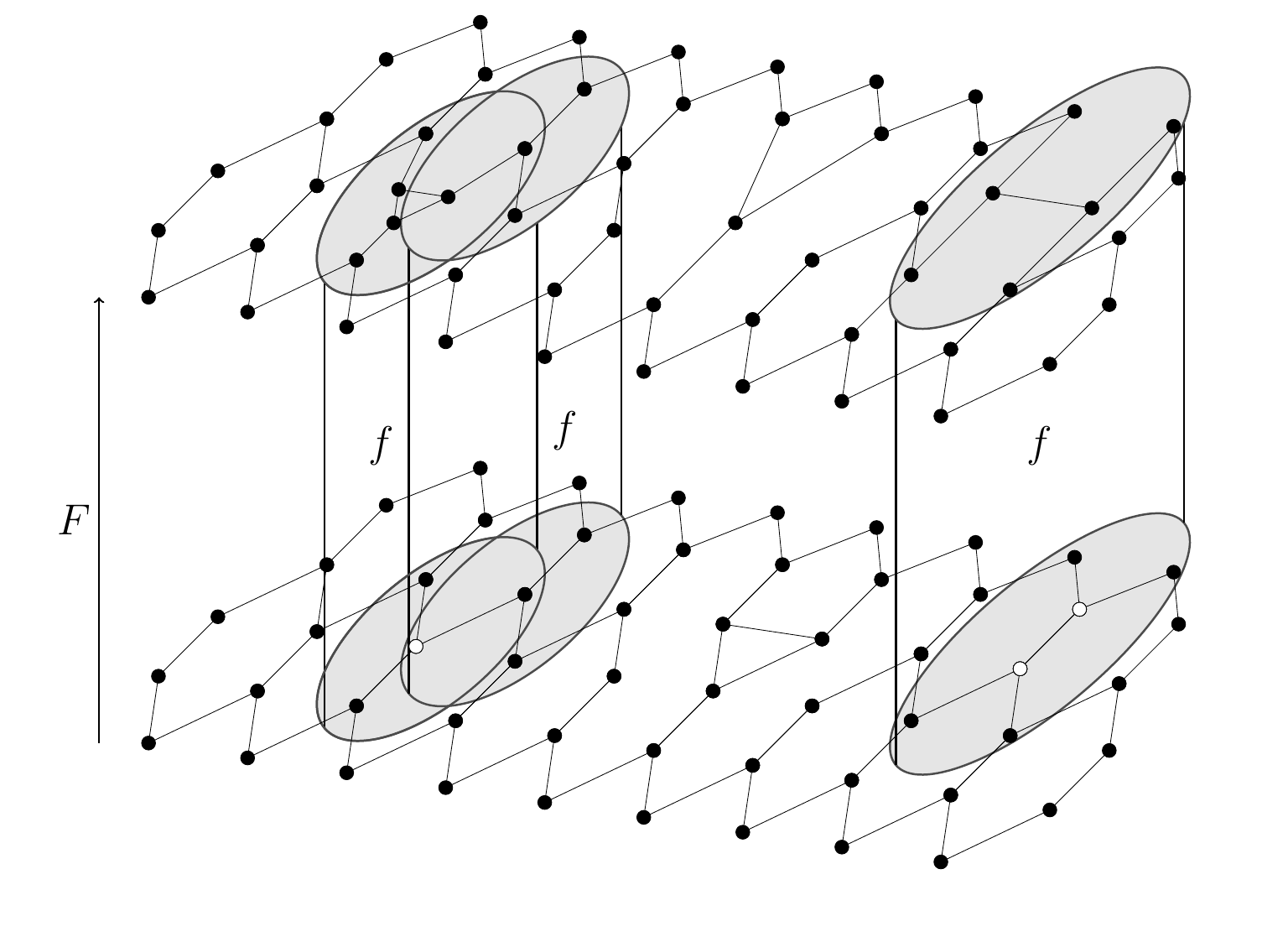}
\end{center}
\vspace{-0.7cm}
\caption{\label{fig:CGDIdea} {\em Informal illustration of Causal Graph Dynamics.}
The entire graph evolves into another according to a global function $F$. But this evolution is causal (information propagates at a bounded speed) and homogeneous (same causes lead to same effects). This has been shown to be equivalent to applying a local function $f$ to every subdisk of the input graphs, producing small output graphs whose union make up the output graph. In this paper, we take the global approach as the starting point, in order to prove that the inverse has the same properties. We then prove that there are local, reversible mechanisms for implementing $F$.
}
\end{figure}
\vspace{-0.7cm}
CGD are motivated by the countless situations in which some agents interact with their neighbours, leading to a global dynamics in which the notion of who is next to whom also varies in time (e.g. agents become physically connected, get to exchange contact details, move around\ldots). Indeed, several existing models (of physical systems, computer processes, biochemical agents, economical agents, social networks\ldots) feature such neighbour-to-neighbour interactions with time-varying neighbourhood, thereby generalizing CA for their specific sake (e.g. self-reproduction as \cite{TomitaSelfReproduction}, discrete general relativity \`a la Regge calculus \cite{Sorkin}, etc.). CGD provide a theoretical framework, for these models. Some graph rewriting models, such as Amalgamated Graph Transformations \cite{BFHAmalgamation} and Parallel Graph Transformations \cite{EhrigLowe,Taentzer,TaentzerHL}, also work out rigorous  ways to apply a local rewriting rule synchronously throughout a graph, albeit with a different, category-theory-based perspective, of which the latest and closest instance is \cite{Maignan}. But the topological approach that we follow and the reversibility question that we address have not been considered in these works. 

Indeed, this paper studies CGD in the reversible regime. Specific examples of these were described in \cite{MeyerLGA,MeyerLove}. From a theoretical Computer Science perspective, the point is therefore to generalize RCA theory to arbitrary, bounded-degree, time-varying graphs. From this perspective, our main two results consist in the generalizations of the two above-mentioned fundamental properties of RCA. The first result states that the inverse of a CGD is also a CGD. This is a non-trivial problem, for instance \cite{KariRevUndec} implies that the radius of the inverse is unbounded: there is no computable function $h$ such that for any reversible CGD of radius $r$, its inverse has a radius smaller than $h(r)$. The fact that the graph is time-varying brings up new challenges. This question was first raised in  \cite{ArrighiCGD,ArrighiIC} by Dowek and one of the authors, who proposed a first approach --- we discuss the way the present approach differs and finishes to answer this question, throughout the text. The second result shows that Reversible CGD admit a block representation, i.e. they can be implemented as a finite-depth circuit of local, reversible gates. This is a non-trivial problem: the \cite{KariCircuit} construction seems inapplicable with dynamical graphs. We manage to apply, after some work, a proof scheme which comes from Quantum CA theory \cite{ArrighiUCAUSAL}.

From a mathematical perspective, questions related to the bijectivity of CA over certain classes of graphs (more specifically, whether pre-injectivity implies surjectivity for Cayley graphs generated by certain groups \cite{Gromov}) have received quite some attention. This paper on the other hand provides a context in which to study ``bijectivity upon time-varying graphs''. %This raises novel questions on the structures of the graphs which are preserved by reversible causal graph dynamics. 
In particular, is it the case that bijectivity  necessarily rigidifies space (i.e. forces the conservation of each vertex)? From this perspective, our main result is that Reversible CGD preserve the number of vertices of all, but a finite number of, graphs.
Again this question was first raised in  \cite{ArrighiCGD,ArrighiIC} by Dowek and one of the authors, who proposed a first approach in a more stringent setting. We discuss the greater generality of the present result within the text.

From a theoretical physics perspective, the question whether the reversibility of small scale physics (quantum mechanics, micro-mechanical), can be reconciled with the time-varying topology of large scale physics (relativity), is a topic of debate and constant investigation. This paper provides a toy, discrete, classical model where reversibility and time-varying topology coexist and interact. But ultimately, this deep question ought to be addressed in a quantum mechanical setting. Indeed, just like RCA were precursors to Quantum CA, this work paves the way for Quantum CGD. Our very recent steps in this direction are available in Pre-print \cite{ArrighiQCGD}. 

This journal paper is based upon two conference proceedings \cite{ArrighiRC,ArrighiBRCGD}. It organized as follows. After introducing our graph model, and the axiomatic definition of CGD, in Sections \ref{sec:AssociatedGraphs} and \ref{sec:Causality}, we prove in Section \ref{sec:invertibility} that invertible CGD conserve the number of vertices of almost all graphs. Section \ref{sec:reversibility} then proves that invertible CGD are reversible. Section \ref{sec:blocks} provides the block representation of Reversible CGD. We then conclude.

\section{Pointed graph modulo, paths, and operations}\label{sec:AssociatedGraphs}

\noindent {\em Why compactness?}
There are two main approaches to CA. The one with a local rule, usually denoted $f$, is the constructive one, but CA can also be defined in a more topological way as being exactly the shift-invariant continuous functions from $\Sigma^{\mathbb{Z}^n}$ to itself, with respect to a certain metric. Through a compactness argument, the two approaches are equivalent. This topological approach carries through to CA over graphs. \\
But for this purpose, one has to make the set of graphs into an appropriate compact metric space, which can only be done for certain pointed graph modulo isomorphism---referred to as generalized Cayley graphs in \cite{ArrighiCayleyNesme}.  This is worth the trouble, as the topological characterization is one of the crucial ingredients to prove that the inverse of a CGD is a CGD.\\ 
A contrario in \cite{ArrighiCGD,ArrighiIC}, Dowek and one of the authors gave a first formalism extending of cellular automata to time-varying graphs, whose set of graphs fails to be a compact metric space, and thus falls just short of achieving a proper generalisation of the classical topological definition of cellular automata.

\noindent {\em Pointed graph modulo.} Basically, the pointed graphs modulo isomorphism (or pointed graphs modulo, for short) are the usual, connected, undirected, possibly infinite, bounded-degree graphs, but with a few additional twists:
\begin{itemize}
\item[$\bullet$] The set of vertices is at most countable. Each vertex is equipped with ports. The set $\ports$ of available ports to each vertex is finite.
\item[$\bullet$] The vertices are connected through their ports, \`a la \cite{Danos200469}: an edge is an unordered pair $\{u \port a, v \port b\}$, where $u,v$ are vertices and $a,b\in \pi$ are ports. Each port is used at most once: if both $\{u\port a,v\port b\}$ and $\{u\port a, w\port c\}$ are edges, then $v=w$ and $b=c$. As a consequence the degree of the graph is bounded by $|\ports|$, 
%\item[$\bullet$] Each vertex has {\em ports} in a finite set $\pi$.  A vertex and its port are written $u \port  a$.
%\item[$\bullet$] An {\em edge} is an unordered pair $\{u \port a, v \port b\}$. I.e. edges are between ports of vertices, rather than vertices themselves, \`a la \cite{Danos200469}. Because the port of a vertex can only appear in one edge, the degree of the graphs is bounded by $|\ports|$, 
which is crucial for compactness. We shall consider connected graphs only.
\item[$\bullet$] The graphs are rooted i.e., there is a privileged pointed vertex playing the role of an origin, so that any vertex can be referred to relative to the origin, via a sequence of ports that lead to it. 
\item[$\bullet$] The graphs are considered modulo isomorphism, so that only the relative position of the vertices can matter.
\item[$\bullet$] The vertices and edges are given labels taken in finite sets $\Sigma$ and $\Delta$ respectively, so that they may carry an internal state just like the cells of a CA. 
\item[$\bullet$] These labelling functions are partial, so that we may express our partial knowledge about part of a graph. %For instance it is common that a local function may yield a vertex, its internal state, its neighbours, and yet have no opinion about the internal state of those neighbours. 
\end{itemize} 
The set of all pointed graphs modulo (see Figure \ref{fig:graphs}$(c)$) having ports $\pi$, vertex labels $\Sigma$ and edge labels $\Delta$ is denoted $\mathcal{X}_{\Sigma,\Delta,\pi}$. A thorough formalization of pointed graphs modulo was given in \cite{ArrighiCayleyNesme}, and is reproduced in Appendix \ref{app:graphs} for the sake of mathematical riguour. For the sake of this paper, however, Figure \ref{fig:graphs} summarizes what there is to know about the definition of pointed graphs modulo. 

\begin{figure}
\begin{center}
\includegraphics[scale=1]{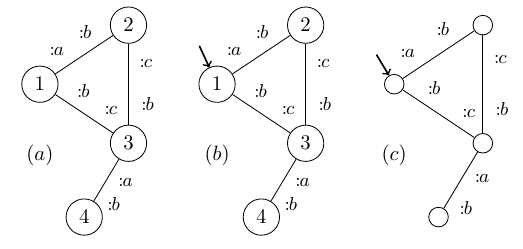}
\end{center}
\vspace{-0.7cm}
\caption{\label{fig:graphs} {\em The different types of graphs.} (a) A graph $G$. (b) A pointed graph $(G,1)$. (c) A pointed graph modulo. The latter is anonymous: vertices have no names and can only be distinguished using the graph structure.}
\end{figure}

\noindent {\em Paths and vertices.} Since we are considering pointed graphs modulo isomorphism, vertices no longer have a unique identifier, which may seem impractical when it comes to designating a vertex. Two elements come to our rescue. First, these graphs are pointed, thereby providing an origin. Second, the vertices are connected through ports, so that each vertex can tell between its different neighbours. It follows that any vertex of the graph can be designated by a sequence of ports in $(\pi^2)^*$ that lead from the origin to this vertex. For instance, say two vertices are designated by paths $u$ and  $v$, respectively. Suppose there is an edge $e=\{u\port a,v\port b\}$. Then, $v$ can be designated by the path $u.ab$, where ``$.$'' stands for the word concatenation. The origin is designated by $\varepsilon$. A thorough formalization of paths, path equivalence, and naming conventions was given in \cite{ArrighiCayleyNesme}, and is reproduced in Appendix \ref{app:paths} for the sake of mathematical riguour. Given a pointed graph modulo $X\in\mathcal{X}_{\Sigma,\Delta,\pi}$, we write $v\in X$ instead of $v\in V(X)$.\\

\noindent {\em Operations over pointed graphs modulo.}\label{sec:Operations} Given a pointed graph modulo $X$, $X^r$ denotes the subdisk of radius $r$ around the pointer. %, i.e. the subgraph of X induced by the vertices at distance at most $r$ of the origin.  
%we need to be able to take the subdisk of radius $r$ around its pointer, leading to $X^r$. 
%$X_u$ is the graph modulo obtained by moving  the pointer along the path $u$. 
%We need to be able to move the pointer of $X$ along a path $u$, leading to $Y=X_u$. We need to be able to move it back where it was before, leading to $X=Y_{\overline{u}}$. We can move the pointer, and then take the subdisk, which we write $X_u^r$. A thorough formalization of generalized Cayley graph operations can be found in \cite{ArrighiCayleyNesme}. For the sake of this paper, Figure \ref{fig:operations} illustrates the operations. 
The pointer of $X$ can be moved along a path $u$, leading to $Y=X_u$. The pointer can be moved back where it was before, leading to $X=Y_{\overline{u}}$, where $\overline{u}$ denotes the reverse of the path $u$. We use the notation $X_u^r$ for $(X_u)^r$ i.e., first the pointer is moved along $u$, then the subdisk of radius $r$ is taken. % We can move the pointer, and then take the subdisk, which we write $X_u^r$. 
A thorough formalization of these basic operations over pointed graphs modulo was given in \cite{ArrighiCayleyNesme}, and is reproduced in Appendix \ref{app:operationsmodulo} for the sake of mathematical riguour. For the sake of this paper, however, Figure \ref{fig:operations} illustrates the operations. 

\begin{figure}[h]
\begin{center}
 \includegraphics[scale=0.5]{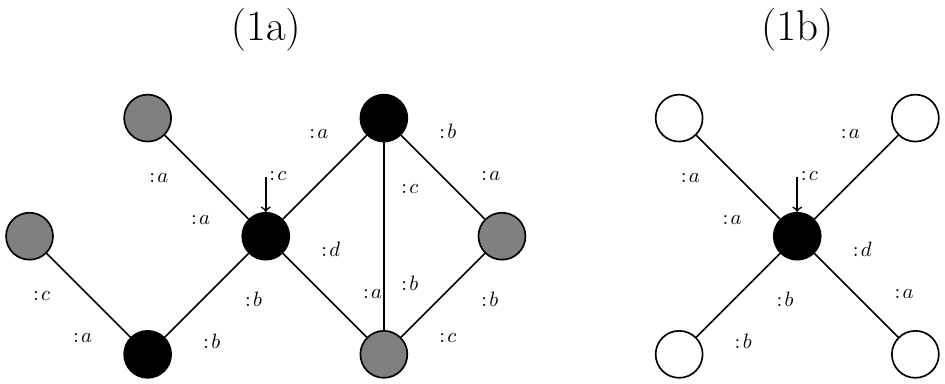}\\
 \includegraphics[scale=1]{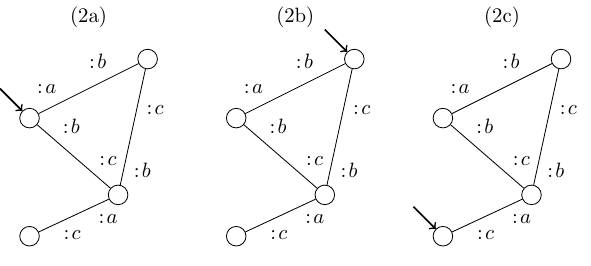}
 \end{center}
  \caption{\label{fig:operations} {\em Operations over pointed graphs modulo.} $(1)$ From $X$ to $X^0$: taking the {\em subdisk of radius} $0$. The neighbours of radius $r$ are those vertices which can be reached in $r$ steps starting from the origin. But the disk of radius $r$, written $X^r$, is the subgraph induced by the neighbours of radius $r+1$, with labellings restricted to the neighbours of radius $r$ and the edges between them. $(2a)$ A pointed graph modulo $X$. $(2b)$ $X_{ab}$ the pointed graph modulo $X$ {\em shifted} by $ab$. $(2c)$  $X_{bc.ac}$ the pointed graph modulo $X$ {\em shifted} by $bc.ac$, which also corresponds to the graph $X_{ab}$ {\em shifted} by $cb.ac$. {\em Shifting} this last graph by $\overline{cb.ac}=ca.bc$ produces the graph $(2b)$ again.}

\end{figure}

\noindent {\em Operations over graphs non-modulo.}
Sometimes we still need to manipulate usual graphs, where vertices do have names, typically in order to perform unions of two graphs in a way that specifies their overlap. In order to be able to make the union of two graphs, the graphs need to be `consistent', i.e. they must not disagree on the label of a vertex or an edge, or its connectivity through a given port. 
The set of non-modulo graphs with ports $\pi$ and vertex labels $\Sigma$ is denoted ${\cal G}_{\Sigma,\Delta,\ports}$. An example of such a graph is represented in Figure \ref{fig:graphs} $(a)$. We also use a cannonical naming function $G:{\cal X}_{\Sigma, \Delta,\ports}\rightarrow{\cal G}_{\Sigma, \Delta,\ports}$ naming each vertex of a graph modulo $X$ by the set of paths that lead to it. Notice that the graph $G(X)$ still contains the information of the position of the pointed vertex in $X$ as this is the only vertex having the empty path present in its name. We will also need an operation $u.G$ that prefixes all the vertex names in $G$, with $u$. A thorough formalization of these basic operations over graphs non-modulo was given in \cite{ArrighiCayleyNesme}, and is reproduced in Appendix \ref{app:operationsnonmodulo} for the sake of mathematical riguour. For the sake of this paper, however, Figure \ref{fig:graphop} summarizes the operations we need.

\begin{figure}
\includegraphics[scale=1]{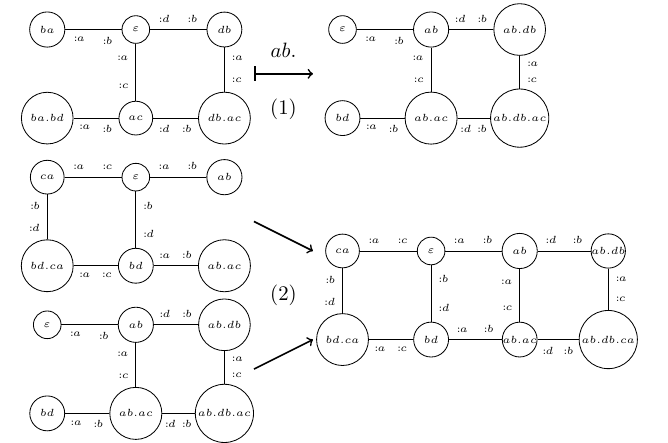}
\caption{\label{fig:graphop} Operation over graphs non-modulo. $(1)$ A prefixing of a graph by the word $ab.$. The structure of the graph is preserved, only the names of the vertices are changed. $(2)$ A graph union. Here the two graphs on the left hand side intersect on vertices $\varepsilon$, $bd$, $ab$ and $ab.ca$. As the two are consistent (e.g. in both graph, vertices $\varepsilon$ and $ab$ are connected along an $ab$ edge) their union can be computed, resulting in the right hand side graph.}
\end{figure}

\section{Causal Graph Dynamics and Invertibility}\label{sec:Causality}

\noindent {\em Causal Graph Dynamics.} We will now recall the definition of CGD. %Not the constructive definition, in terms of the local rule $f$ applied synchronously across space (Figure \ref{fig:CGDIdea}). Rather, 
We reproduce the topological definition in terms of shift-invariant continuous functions, rather than a constructive definition based on a  local rule $f$ applied synchronously across space (Figure \ref{fig:CGDIdea}). The two were proved equivalent in \cite{ArrighiCayleyNesme}, but for Reversible CGD, the axiomatic approach is the right point of departure.

A crucial point in the topological  characterization of CGD is the correspondence between 
%The other great difficulty that was encountered when elaborating the topological characterization of CGD  was the need to establish a correspondence between 
the vertices of a pointed graph modulo $X$, and those of its image $F(X)$. Indeed, on the one hand it is important to know that a given vertex $u\in X$ has become $u'\in F(X)$, e.g. in order to express shift-invariance $F(X_u)=F(X)_{u'}$, or to express continuity. But on the other hand since $u'$ is named relative to the vertex $\varepsilon$ of $F(X)$, its determination requires some knowledge of $F(X)$. Hence the need for establishing a relation between the vertices of $X$ and those of $F(X)$.

The following analogy provides a useful way of tackling this issue. Say that we were able to place a white stone on the vertex $u\in X$ that we wish to follow across evolution $F$. Later, by observing that the white stone is found at $u'\in F(X)$, we would be able to conclude that $u$ has become $u'$. This way of grasping the correspondence between an image vertex and its preimage vertex is a local, operational notion of an observer moving across the dynamics. 

\begin{definition}[Dynamics]\label{def:dynamicsmodulo}
A dynamics $(F,R_{\bullet})$ is given by
\begin{itemize}
\item[$\bullet$] a function $F\colon{\cal X}_{\Sigma,\Delta,\ports}\to{\cal X}_{\Sigma,\Delta,\ports}$;
\item[$\bullet$] a map $R_{\bullet}$, with $R_{\bullet}\colon X\mapsto R_X$ and $R_X: V(X) \to V(F(X))$.
\end{itemize}
Notice that the function $R_X$ can be pointwise extended to sets of vertices i.e., $R_X\colon{\cal P}(V(X)) \to {\cal P}(V(F(X)))$ maps $S$ to $R_X(S)=\{R_X(u)\;|\;u\in S\}$.
\end{definition}
The intuition is that $R_X$ indicates which vertices $\{u',v',\ldots\}=R_X(\{u,v,\ldots\})\subseteq V(F(X))$ will end up being marked as a consequence of $\{u,v,...\}\subseteq V(X)$ being marked. Now, clearly, the set $\{(X,S)\;|\;X\in {\cal X}_{\Sigma,\Delta,\ports}, S\subseteq V(X)\}$ is isomorphic to ${\cal X}_{\Sigma',\Delta,\ports}$ with $\Sigma'=\Sigma\times\{0,1\}$. Hence, we can define the function $F'$ that maps $(X,S)\cong X'\in {\cal X}_{\Sigma',\Delta,\ports}$ to $(F(X),R_X(S))\cong F'(X')\in {\cal X}_{\Sigma',\Delta,\ports}$, and think of a dynamics as just this function $F':{\cal X}_{\Sigma',\Delta,\ports}\to {\cal X}_{\Sigma',\Delta,\ports} $. 

\noindent Next, continuity is the topological way of expressing causality, i.e. bounded speed of propagation of information:
\begin{definition}[Continuity]\label{def:continuitymodulo}
A dynamics $(F,R_{\bullet})$ is said to be {\em continuous} if for any $X$ and any $m\ge 0$, there exists $n\ge 0$ such that for every $Y$, $X^n=Y^n$ implies both
\begin{itemize}
\item[$\bullet$] $F(X)^m=F(Y)^m$.
\item[$\bullet$] $\dom\,R_{X}^m\subseteq V(X^n)$, $\dom\,R_{Y}^m\subseteq  V(Y^n)$, and $R_{X}^m=R_{Y}^m$.
\end{itemize}
where $R_{X}^m$ denotes the partial map obtained as the restriction of $R_X$ to the codomain $F(X)^m$, using the natural inclusion of $F(X)^m$ into $F(X)$.
\end{definition}
In the $F':{\cal X}_{\Sigma',\Delta,\ports}\to{\cal X}_{\Sigma',\Delta,\ports}$ formalism, the two above conditions are equivalent to just one: $F'$ continuous.

\begin{lemma}[White stone]
A dynamics $(F,R_{\bullet})$ is continuous if and only if its corresponding $F'$ is continuous. 
\end{lemma}
\begin{proof}
\begin{align*}
&\forall X',\,\forall m,\,\exists n\,/\,\forall Y',\,  [{X'}^n={Y'}^n \Rightarrow F'(X')^m=F'(Y')^m] \\
\Leftrightarrow\quad & \forall D,\,\forall U,\,\forall X,\,\forall S\,\,\forall m,\,\exists n\,/\,\forall Y',\,\\
&[({X}^n={Y}^n=D\,\wedge\, S^n=T^n=U) \Rightarrow\\
&\quad(F(X)^m=F(Y)^m=F(D)^m)\,\wedge\,R_X^m(S)=R_Y^m(T)=R_D^m(U))]\\ 
\Leftrightarrow\quad & \forall D,\,\forall U,\,\forall X,\,\forall S\,\,\forall m,\,\exists n\,/\,\forall Y,\,\\
&[({X}^n={Y}^n=D\,\wedge\, S^n=T^n=U) \Rightarrow\\
&\quad(F(X)^m=F(Y)^m)\,\wedge\,R_X^m(S)=R_Y^m(T)=R_D^m(U))\\
&\quad\wedge\,\dom\,R_{X}^m=V(U)=V(X^n) \,\wedge\,\dom\,R_{Y}^m= V(U)=V(Y^n)]\\
\Leftrightarrow\quad & \forall X,\,\forall m,\,\exists n\,/\,\forall Y,\,\\
&[{X}^n={Y}^n \Rightarrow (F(X)^m=F(Y)^m)\\
&\quad\wedge\,\dom R_{X}^m=V(X^n)\,\wedge\,\dom R_{Y}^m=V(Y^n)\,\wedge\,R_X^m=R_Y^m]
\end{align*}
\end{proof}

\noindent Recall that the reason why continuity is the topological way of expressing causality, is because it turns out to be equivalent to uniform continuity:
\begin{definition}[Uniform continuity]\label{def:continuitymodulo}
A dynamics $(F,R_{\bullet})$ is said to be {\em uniformly continuous} if for any $m\ge 0$, there exists $n\ge 0$ such that for every $X, Y$, $X^n=Y^n$ implies both
\begin{itemize}
\item[$\bullet$] $F(X)^m=F(Y)^m$.
\item[$\bullet$] $\dom\,R_{X}^m\subseteq V(X^n)$, $\dom\,R_{Y}^m\subseteq  V(Y^n)$, and $R_{X}^m=R_{Y}^m$.
\end{itemize}
where $R_{X}^m$ denotes the partial map obtained as the restriction of $R_X$ to the codomain $F(X)^m$, using the natural inclusion of $F(X)^m$ into $F(X)$.
\end{definition}
In the $F':{\cal X}_{\Sigma',\Delta,\ports}\to{\cal X}_{\Sigma',\Delta,\ports}$ formalism, the two above conditions are equivalent to just one: $F'$ uniformly continuous.

\noindent Recall also that the reason for this equivalence between continuity and uniform continuity is the compactness of ${\cal X}_{\Sigma',\Delta,\ports}$ as a metric space, which allows for a direct application of Heine's theorem, as is extensively discussed in \cite{ArrighiCayleyNesme}. A contrario the formalism of \cite{ArrighiCGD,ArrighiIC} by Dowek and one of the authors lacked a compact metric space, and such results had to be reproven by hands. 

\noindent We now express the fact that the same causes lead to the same effects:
\begin{definition}[Shift-invariance]
A dynamics $(F,R_{\bullet})$ is said to be {\em shift-inva\-riant} if for every $X$,  $u\in X$, and $v\in X_u$, 
\begin{itemize}
\item[$\bullet$] $F(X_u)=F(X)_{R_X(u)}$
\item[$\bullet$] $R_X(u.v)=R_X(u).R_{X_u}(v)$.
\end{itemize}
\end{definition}
The second condition expresses the shift-invariance of $R_{\bullet}$ itself. Notice that  $R_X(\varepsilon)=R_X(\varepsilon).R_X(\varepsilon)$; hence $R_X(\varepsilon)=\varepsilon$.
%In the $F':{\cal X}_{\Sigma',\Delta,\ports}\to{\cal X}_{\Sigma',\Delta,\ports}$ formalism, the two above conditions are equivalent to the single condition $F'(X'_u)=F'(X')_{R_X(u)}$.  

\noindent Finally we demand that the graphs does not expand in an unbounded manner:
\begin{definition}[Boundedness]\label{def:boundednessmodulo}
A dynamics $(F,R_{\bullet})$ % from ${\cal X}_{\Sigma,\Delta,\ports}$ to ${\cal X}_{\Sigma,\Delta,\ports}$ 
is said to be {\em bounded} if there exists a bound $b$ such that for any $X$ and any $w'\in F(X)$, there exist $u'\in \textrm{Im}(R_X)$ and $v'\in F(X)_{u'}^b$ 
such that $w'=u'.v'$.
\end{definition}

\noindent Putting these conditions together yields the topological definition of CGD:
\begin{definition}[Causal graph dynamics]\label{def:causal}
A CGD is a shift-invariant, continuous, bounded dynamics.
\end{definition}

\noindent\emph{Example: inflating grid.} An example of CGD is given in Figure \ref{fig:inflatingglobal}. In the inflating grid example, each vertex gives birth to four distinct vertices, such that the structure of the initial graph is preserved, but inflated. The graph has maximal degree $4$, and set of ports $\pi=\{a,b,c,d\}$, its vertices %and edges 
are labelled black or white.

\begin{figure}[h]
\centerline{\includegraphics[scale=0.98]{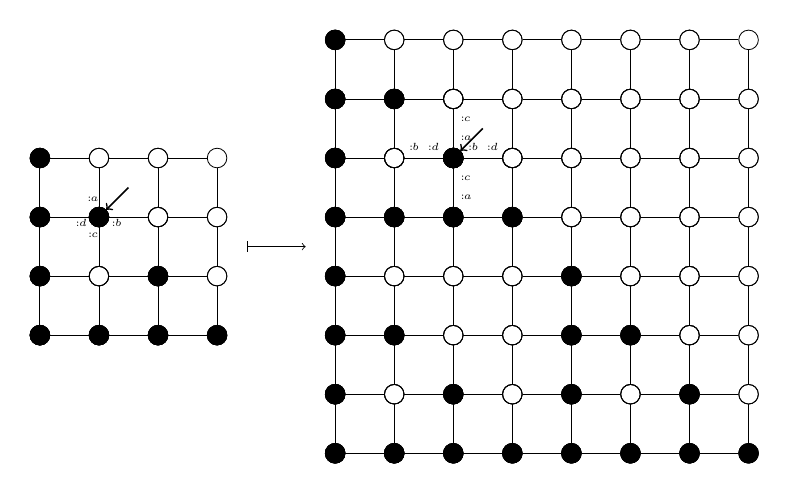}}
\caption{\label{fig:inflatingglobal} {\em The inflating grid example.} Each vertex splits into 4 vertices. The structure of the grid is preserved. For this precise graph, all edges are connected to ports as stipulated on the pointed vertex (port $\port a$ on top, $\port b$ on the right, $\port c$ on the bottom and $\port d$ on the left).}
\end{figure}

\medskip
\noindent {\em Invertibility.} Invertibility is imposed in the most general and natural fashion.

\begin{definition}[Invertible dynamics]
A dynamics $(F,R_\bullet)$ is said to be invertible if $F$ is a bijection over $ {\cal X}_{\Sigma,\Delta,\ports}$.
\end{definition}
\noindent\emph{Example: moving head.} Figure \ref{fig:poil} is an example of invertible CGD. In this example, a vertex, representing the head of an automaton, is moving along a path graph, representing a tape. The path graph is built using $ab-$edges, while the head is attached using either a $cc-$edge if it is travelling forward along the $ab-$edges, or $dd-$edges if it is travelling backwards. The transformation can be completed into a bijection over the entire set of graphs with $\pi=\{a,b,c,d\}$. It then accounts for several heads, etc. The resulting transformation is continuous, as the moving heads travel at speed one along the tape, and shift-invariant as it is possible to build a $R_\bullet$ operator verifying the right commutation properties.

\begin{figure}
\vspace{-0.5cm}
\begin{tabular}{lll}
\includegraphics[scale=0.32]{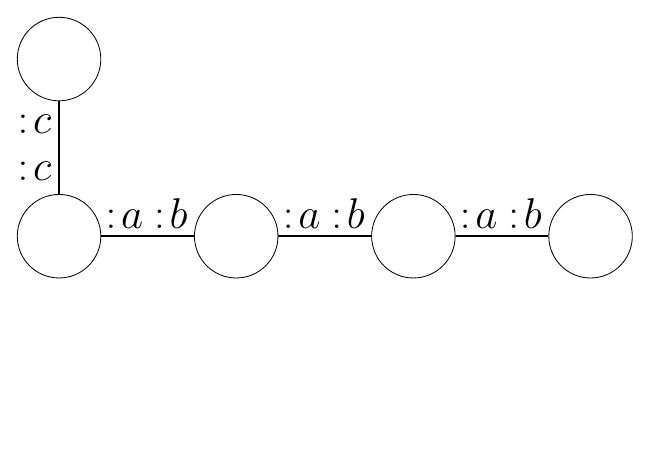} &
~~~~~\includegraphics[scale=0.32]{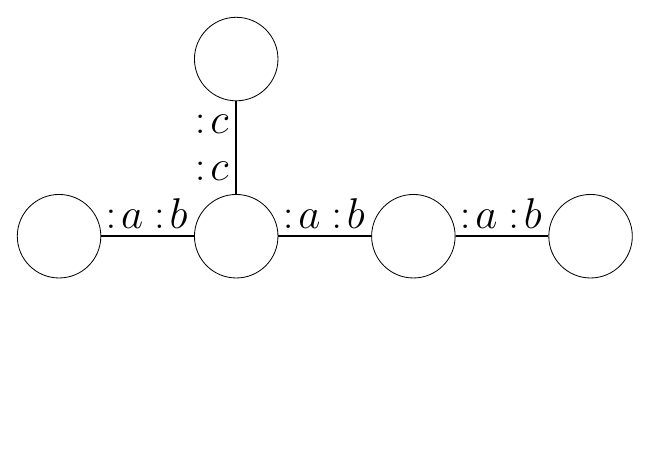}~~~~~ &
\includegraphics[scale=0.32]{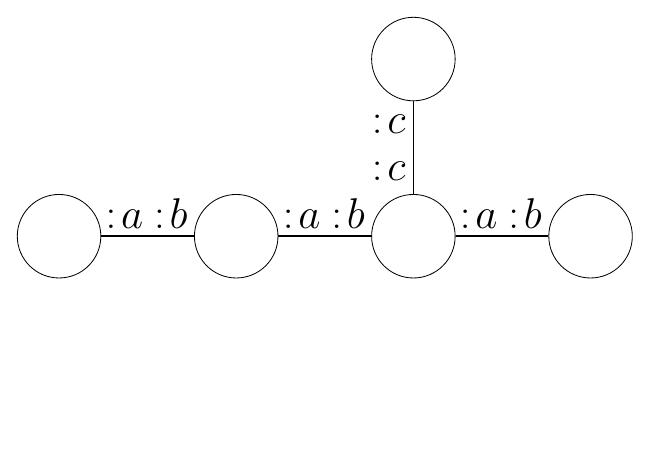}\vspace{-1cm} \\
~~(1)&~~~~~~~(2)&~~(3)\\

\includegraphics[scale=0.32]{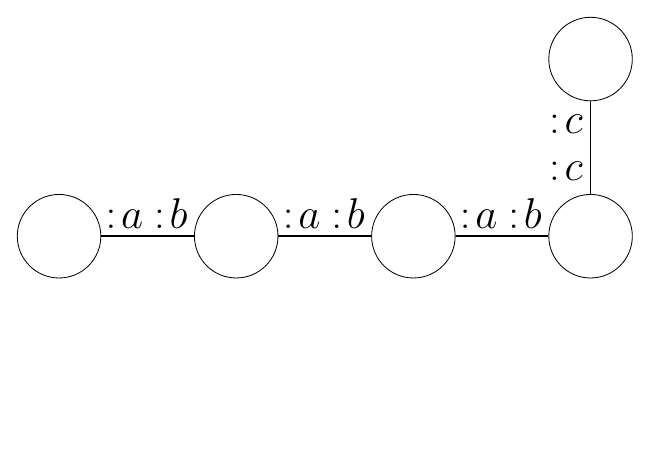} &
~~~~~\includegraphics[scale=0.32]{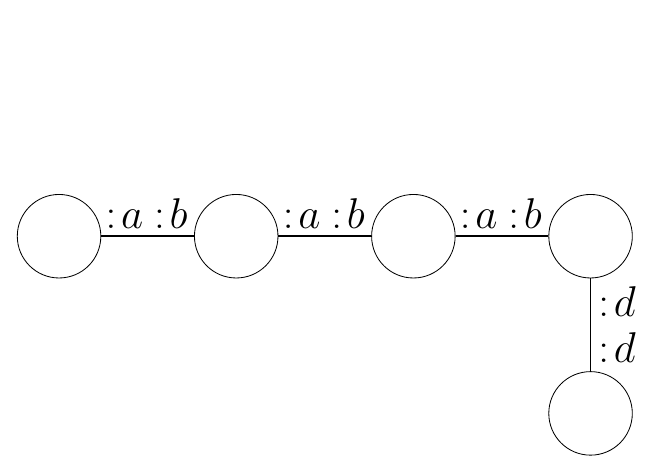} ~~~~~&
\includegraphics[scale=0.32]{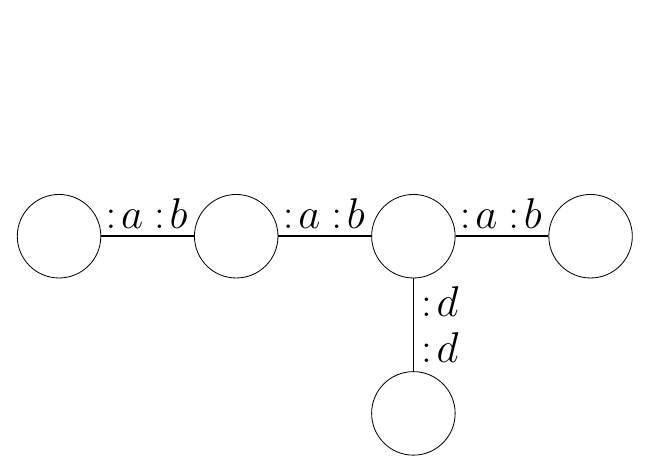} \vspace{-1cm} \\
~~(4)&~~~~~~~(5)&~~(6)\\
\\
\end{tabular}
%\begin{tabular}{ccc}
%(1)\includegraphics[scale=0.33]{exemple_step1.pdf} &
%(2)\includegraphics[scale=0.33]{exemple_step2.pdf} &
%(3)\includegraphics[scale=0.33]{exemple_step3.pdf} \\
%
%(4)\includegraphics[scale=0.33]{exemple_step4.pdf} &
%(5)\includegraphics[scale=0.33]{exemple_step5.pdf} &
%(6)\includegraphics[scale=0.33]{exemple_step6.pdf} 
%\end{tabular}
\caption{\label{fig:poil} {\em The moving head example.} In this example, a head is attached, via a $cc-$ edge, to a ``tape'' formed by $ab-$edges. It moves forward until it reaches the end of the line. It then changes the attaching ports to $dd$ and moves backwards. $(1)$ to $(6)$ represent consecutive configurations.}
\end{figure}
\noindent\emph{Example: Turtle dynamics.} The turtle dynamics simply oscillates between the two pointed graphs modulo of degree $1$. Figure \ref{fig:rofturtle} describes its associated $R_\bullet$ operator.

\begin{figure}[h]
\begin{center}
\includegraphics[scale=1]{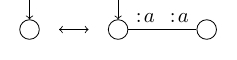}
\end{center}
\vspace{-0.7cm}
\caption{\label{fig:turtle} {\em The turtle example} has the two above pointed graphs modulo to oscillate between one another. The two vertices of the RHS are shift-equivalent, i.e. pointing the graph upon one or the other does not change the graph.}
\end{figure}

\begin{figure}[h]
\begin{center}
\includegraphics[scale=1]{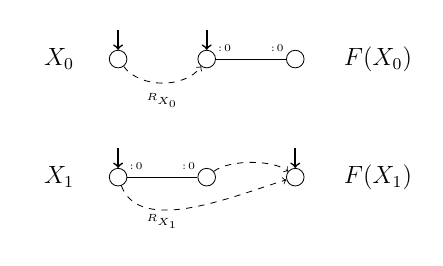}
\end{center}
\vspace{-0.7cm}
\caption{\label{fig:rofturtle} The $R_\bullet$ operator associated to the turtle dynamics.}
\end{figure}

\medskip
\noindent {\em Restricting CGD\ldots or extending CA?} Notice that, because CGD are strictly more powerful than CA, it can be a tricky task to associate to a given CA a single, canonical CGD. Consider, for instance, a one-dimensional CA with internal state in a finite set $\Sigma$. The natural set of graphs to consider would be the complete set ${\cal X}_{\Sigma,\Delta,\ports}$, with $\pi=\{a,b\}$. But in this natural set not all graphs can be interpreted as CA configurations. In CA configurations each vertex has degree exactly $2$, and each edge is of the form $ab$. There are two distinct ways of tackling this issue:
\begin{itemize}
\item The canonical approach is to work with a restricted CGD, that is no longer defined on the complete set of graphs ${\cal X}_{\Sigma,\Delta,\ports}$, but on a subset of it. If this subset is characterized by forbidding the occurrence of certain disks, then it will still be a compact metric space, and theorems of this paper will carry through, except perhaps that of the block representation. In the specific case of CA, however, forbidding the occurrence of disk that have non degree $2$ vertices and non $ab$ edges amounts to just restricting to CA configurations. Thus, these restricted CGD are just CA, for which we know there are equivalent theorems \cite{Hedlund,KariCircuit,ArrighiBLOCKREP}. This approach is canonical since it requires no additional choice.
\item The non-canonical approach is to extend CA to become a fully-defined CGD over ${\cal X}_{\Sigma,\Delta,\ports}$. It is not clear to us at this stage whether this can always be done whilst perserving invertibility, and keeping labels within $\Sigma$. If $\Sigma$ is allowed to be extended, then we conjecture that this is possible, via a prior step of \cite{ArrighiBLOCKREP}. In any case there are many possible choices for such an extension, this approach is non-canonical.  
\end{itemize}

\section{Invertibility and almost-vertex-preservation}	\label{sec:invertibility}

Recall that, in general, CGD are allowed to transform the graph, not only by changing internal states and edges, but also by creating or deleting vertices. Since invertibility imposes information-conservation, one may wonder whether invertible CGD are still allowed to create or delete vertices. They are, as shown by Figure \ref{fig:turtle}. One notices, however, that the RHS of this example features shift-equivalent vertices:
\begin{definition}[Shift-equivalent vertices]\label{def:shifteqvert}
Let $X\in {\cal X}_{\Sigma,\Delta,\ports}$ and let $u,v\in X$. We say that $u$ and $v$ are \emph{shift-equivalent}, denoted $u\sim_X v$, if $X_u=X_v$.
%This equivalence relation is denoted $u\sim v$. 
A graph is called \emph{asymmetric} if it has only trivial (i.e. size one) shift-equivalence classes.
\end{definition}

One can show that all the shift-equivalence classes of a pointed graph modulo  have the same size. Intuitively, given  two shift-equivalent vertices $u,v$ and a third vertex $w$, since there is a path from $u$ to $w$,  moving from $v$ along the same path leads to a vertex equivalent to $w$.

%this is because if you consider two shift-equivalent vertices, you can travel along a given path starting from either of them, this will give you two distinct equivalent vertices. We obtain:
\begin{lemma}[Shift-equivalence classes isometry]\label{lemma:iso}
Let $X\in{\cal X}_{\Sigma,\Delta,\ports}$ be a graph. If $C_1\subseteq{V(X)}$ and $C_2\subseteq{V(X)}$ are two shift-equivalence classes of $X$, then $C_1$ and $C_2$ have the same cardinality.
\end{lemma}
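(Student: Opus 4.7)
The plan is to exhibit, for any two shift-equivalence classes $C_1$ and $C_2$ of $X$, an explicit bijection between them. I would fix representatives $u \in C_1$ and $w \in C_2$, viewed as port-paths from the origin $\varepsilon$, and define $\Phi\colon C_1 \to V(X)$ by $\Phi(v) := v.\ovb{u}.w$, where the right-hand side is the vertex reached from $\varepsilon$ by following the port sequence $v.\ovb{u}.w$. Geometrically this is exactly the idea in the paragraph preceding the lemma: from $v$ one first undoes $u$ to land at what plays the role of the origin relative to $v$, then follows $w$ to reach the counterpart of $w$ as seen from $v$.

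The first step is to check that $\Phi$ lands in $C_2$. Using the functoriality $(X_a)_b = X_{a.b}$, the cancellation $X_{a.\ovb{a}} = X$, and the hypothesis $X_v = X_u$ that witnesses $v \approx_X u$, one computes $X_{\Phi(v)} = ((X_v)_{\ovb{u}})_w = ((X_u)_{\ovb{u}})_w = X_w$, so $\Phi(v) \approx_X w$ and hence $\Phi(v) \in C_2$.

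Then I would introduce the symmetric map $\Psi\colon C_2 \to C_1$ defined by $\Psi(w') := w'.\ovb{w}.u$; the same calculation shows $\Psi(w') \in C_1$. Since $a.\ovb{a}$ and $\ovb{a}.a$ act as the identity on vertices for any port-path $a$ (each port appears in at most one edge, so retracing a port-sequence always returns to the starting vertex), the concatenated path $v.\ovb{u}.w.\ovb{w}.u$ evaluates to $v$ and dually $w'.\ovb{w}.u.\ovb{u}.w$ evaluates to $w'$. Hence $\Psi \circ \Phi = \mathrm{id}_{C_1}$ and $\Phi \circ \Psi = \mathrm{id}_{C_2}$, so $|C_1| = |C_2|$.

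The main obstacle is really just the book-keeping around the path-based naming of vertices: one must verify formally that $(X_a)_b = X_{a.b}$ and that $a.\ovb{a}$ acts as the identity at the level of vertices (and not merely as a formal string of ports) inside the pointed-graphs-modulo formalism. Both facts are routine consequences of the rule that every port is used in at most one edge, and are standard in the setup referenced in \cite{ArrighiCayleyNesme}; once they are in place the bijection $\Phi$ above completes the argument.
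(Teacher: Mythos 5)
Your proof is correct and takes essentially the same route as the paper's: both translate one shift-equivalence class onto the other along a path connecting representatives (your $\Phi(v)=v.\overline{u}.w$ is exactly the paper's map $v\mapsto v.w'$ with connecting path $w'=\overline{u}.w$), using that shifting preserves equivalence and that retracing a port path acts as the identity on vertices. The only difference is presentational: you exhibit an explicit two-sided inverse $\Psi$, whereas the paper argues distinctness of the translated vertices and concludes by a symmetric cardinality comparison.
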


\begin{proof}
Consider two equivalent and distinct vertices $u$ and $v$ in $X$. Consider a path $w$. The vertices $u.w$ and $v.w$ are distinct and equivalent. More generally, if we have $n$ equivalent distinct vertices $v_1,...,v_n$, any vertex $u=v_1.w$ will be equivalent to $v_2.w,...,v_n.w$ and distinct from all of them, hence all the equivalence classes have the same cardinality.\hfill $\Box$
\end{proof}

\begin{lemma}[Structure of symmetric graphs]\label{lemma:strucsym}
Let $X\in{\cal X}_{\Sigma,\Delta,\ports}$ be a symmetric graph. Then
$$X=\bigcup_{u\in T}u.G$$
with 
\begin{itemize}
\item $T$ vertex-transitive. 
\item $V\subseteq V(X)$ such that $\varepsilon\in V$ and $\{u.V\}_{u\in T}$ is a partition of $V(X)$.
\item $G=G(X_V^0)$.
\item $w\sim w'$ if and only if $w=u.v$, $w'=u'.v$, $u,u'\in T$, and $v\in G$.
\end{itemize}
\end{lemma}
\begin{proof}
From symmetry there are $u\neq v$ such that $X_u=X_v$. Write
$u\stackrel{\neq}{\sim}v$. This entails $u.w\stackrel{\neq}{\sim}v.w$, and taking $w=\overline{u}$ we have $\varepsilon\stackrel{\neq}{\sim}v.\overline{u}$. Call $C$ the equivalence class of $\varepsilon$.\\
$[$Constructing $T$ $]$.  
Let $H$ be a pointed graph non-modulo such that $V(H)=C$. We construct its set of edges as follows. First, for every $u,v\in C$ we define the set of straight paths between them to be $$S(u,v)=\{w\in X,|\,u.w=v\wedge\nexists x,y / x.y=w\wedge u.x\in C\}.$$
A path is less than another if it is shorter in length, or equal in length and smaller lexicographically. $s(u,v)$ is the minimal $S(u,v)$. The left (resp. right) of a path is the word made of every odd (resp. even) letter.
$$\{u:\textrm{left}(s(u,v)),v:\textrm{right}(s(u,v))\}\in E(H)\,\Leftrightarrow\, S(u,v)\neq \emptyset.$$
The pointer is at $\varepsilon$. That defines $H$ and $T=\widetilde{(H,\varepsilon)}$, from $X$ and $C$ the equivalence class of $\varepsilon$.\\

$[$Vertex transitivity of T$]$ Consider $u\in C$. We can do the above procedure starting from $u$ and obtain a graph modulo $T(u)$. Since $u$ and $\varepsilon$ are symmetric and since the construction of $H$ does not privilege any vertex, we have that $T(\varepsilon) = T(u)$. By construction we have that $T(\varepsilon)=\widetilde{(H,\varepsilon)}$ and $T(u)=\widetilde{(H,u)}=\widetilde{(H,\varepsilon)}_u$. Hence, we have that $T(\varepsilon)=T(\varepsilon)_u$.

% To show that $T$ is vertex transitive, notice that starting the same procedure from some $u$ in $T$ we would have obtained $T(u)=T_u$. Moreover $u\in V(T)\Rightarrow u\in C\Rightarrow X_u=X$.
% \todo{Hence, to obtain $H$, we could have applied the same procedure from ... upon $X_u$, yielding $\overline{u}.H_u$.} Now,

% $$T_u=(\widetilde{H})_u=\widetilde{(H_u)}=\widetilde{\overline{u}.(H_u)}=\widetilde{H}=T. $$
$[$Constructing $G$ $]$. Let $v\notin C$,
$$V=\big\{v\,|\,\forall u\in C,\,s(\varepsilon,v)\textrm{ exists and is smaller than }s(u,v)\}\cup\{\varepsilon\}.$$
Notice that $X_V^0$ is connected. Indeed, if $w$ is a vertex along $s(\varepsilon,v)$, then $w\in V$. Let $G=G(X_V^0)$. That defines $G$ from $X$ and $C$ the equivalence class of $\varepsilon$.\\
Starting from some $u$ in $T$ instead, we would have obtained some $V(u)$ and some $G(u)$. But notice that $u\in V(T)\Rightarrow u\in C\Rightarrow X_u=X$ implies $V(u)=u.V$. Indeed, $v\in V(\varepsilon)$ is equivalent to $u.v\in V(u)$, because $s(\varepsilon,v)$ minimal is equivalent to $s(u,u.v)$ minimal. Moreover, since for all $v\in X$, $X_{u.v}^r=X_v^r$, we have $G(u)=G(X_{V(u)}^0)=G(X_{u.V}^0)=G(X_V^0)=G$.\\
Moreover, there is always some $u\in T$ that has the smallest straight path to $v$, therefore every $v$ must belong to some $V(u)$.\\
Finally, every edge of $X$ has one of its vertices in some $V(u)$, and must therefore belong to some $G(u)$.\\
Therefore, altogether
$$X=\bigcup_{u\in T}u.G.$$
\\
$[$ $w\sim w'$ if and only if $w=u.v$, $w'=u'.v$, $u,u'\in T$, and $v\in G$$]$ Consider $w\sim w'$. Consider $u\in T$ such that $w\in G(u)$, $u' \in T$ such that $w'\in G(u')$. Consider $v$ and $b'$ the shortest paths equivalent to $w$ and $w$. By definition, $v$ and $v'$ belong to $G$, and, since $w \sim w'$ we have that  $v = v'$. By minimality of $v$ we have that $w = u.v$ and $w'=u'.v$. Conversely if $u$ and $u'$ are in $T$ and $v$ is in $G$ then $u.v \sim u'.v$ by construction of $T$.

\qed
\end{proof}

\begin{definition}[Asymmetric extension]\label{def:asymext}
Given a finite symmetric graph $X\in{\cal X}_{\Sigma,\Delta,\ports}$, let $V$ be defined as in Lemma \ref{lemma:strucsym}. We obtain an {\em asymmetric extension} ${}^\Box X$ by either:
\begin{itemize}
\item Choosing a vertex $w\in V$ having a free port and connecting an extra vertex $w.e$ onto it. 
\item Or choosing vertex $w\in V$ that is part of a cycle, removing an edge $e$ of the cycle $w$ that was connecting $w$ and $w'$, and adding the two extra vertices $w.e$ and $w'.\overline{e}$, having the same label as the removed edge.
\end{itemize}
\end{definition}

\begin{lemma}[Asymetry of asymetric extension]\label{lem:primalext}
Given a finite symmetric graph $X\in{\cal X}_{\Sigma,\Delta,\ports}$, its asymmetric extension ${}^\Box X$ is asymmetric, and $|{}^\Box X| \leq |X| + 2$.
\end{lemma}
\begin{proof}
We will refer to $T$, $V$ and $G$ of Lemma \ref{lemma:strucsym}. We call $w$ the vertex of $V$ upon which edge $e$ is added to take $X$ into ${}^\Box X$.\\
First we prove that old symmetries are broken by the extension. Symmetries in $X$ were of the form $u.v\sim_X u'.v$ with $u,u'\in T$, $u\neq u'$ and $v\in G$. Consider $x=\overline{v}.\overline{u}.w$, so that $u.v.x=w$ and $u'.v.x=u'.\overline{u}.w$. If $w$ had a free port we have $x.e\in {}^\Box X_{u.v}$ but $x.e\notin {}^\Box X_{u'.v}$. If $w$ had no free port we have that $x.e$ has no further edge $e'\neq e$ in ${}^\Box X_{u.v}$, but has a further edge $e'\neq e$ in ${}^\Box X_{u'.\overline{u}}$. In both cases $u.v\not\sim_{{}^\Box X} u'.v$.

Second we prove that no new symmetry has been created by the extension. In $X$, $u.v\not\sim_X u'.v'$ implied the existence of two witness paths $x,y\in X$ such that at least one of the following holds
\begin{itemize}
\item[$(i)$] The label of the vertex reached by path $x$ is different in $X_{u.v}$ from what it is in $X_{u'.v'}$.
\item[$(ii)$] The label of the last edge borrowed by path $x$ is different in $X_{u.v}$ from what it is in $X_{u'.v'}$.
\item[$(iii)$] The path $x$ has an edge $f$ in $X_{u.v}$ but not $X_{u'.v'}$, or the converse.
\item[$(iv)$] The paths $x$ and $y$ do not reach the same vertex in $X_{u.v}$ but they do in $X_{u'.v'}$, or the converse.
\end{itemize}
If $w$ is in a cycle, the witness paths can be chosen so that $w$ is not a strict prefix of $x,\, y$, which we do in order to garantee that $x,\, y\in {}^\Box X$.
The existence of $x$ as in $(i)$ and $(ii)$ is clearly unaffected by the asymetric extension, because it does not act on labels. Therefore in these cases, $u.v\not\sim_{{}^\Box X} u'.v'$ still holds. We can exclude those cases.\\
The existence of $x,y$ as in $(iii)$ or $(iv)$ is also clearly unaffectected when both $u.v.x\notin V$ and $u'.v'.x\notin V$, because the asymetric extension only acts on $V$. The case when both $u.v.x\in V$ and $u'.v'.x\in V$ is not problematic either, because the symmetries of $X$ tell us that there are $x',y'$ meeting the same condition and such that $u.v.x'\notin V$ and $u'.v'.x'\notin V$, namely $x'=\overline{x}.\overline{v}.u''.v.x,\,y'=\overline{y}.\overline{v}.u''.v.y$. We can exclude those cases.\\
From now on suppose, without loss of generality, that $u.v.x\in V$ and $u'.v'.x\notin V$.\\
Then, the existence of $x,y$ as in $(iii)$ or $(iv)$ is clearly unaffected in the direct (aka non-converse) subcase. But the converse subcase is not problematic either, because the symmetries of $X$ tell us that there are $x', y'$ meeting the same condition and such that $u.v.x'\notin V$ and $u'.v'.x'\in V$, namely $x'=\overline{x}.\overline{v}.u'.v'.x,y'=\overline{y}.\overline{v}.u'.v'.y$. We covered all cases.\\
\end{proof}

Moreover, we can show that creation or deletion of vertices by invertible CGD must respect the shift-symmetries of the graph. 

\begin{lemma}[Invertible CGD preserves shift-equivalence classes] \label{lem:invert1}
Let $(F,R_\bullet)$ be a shift-invariant dynamics over $\mathcal{X}_{\Sigma,\Delta,\pi}$, such that $F$ is a bijection. Then for any $X$ and any $u,v\in X$,  $u \sim_X v $ if and only if $R_X(u) \sim_{F(X)} R_X(v)$.
\end{lemma}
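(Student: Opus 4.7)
The plan is to prove both implications separately using only the first clause of shift-invariance, namely $F(X_w) = F(X)_{R_X(w)}$, together with injectivity of $F$ (for one direction).

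For the forward direction, I would suppose $u \approx_X v$, i.e.\ $X_u = X_v$. Applying $F$ to both sides yields $F(X_u) = F(X_v)$. By shift-invariance, the left-hand side equals $F(X)_{R_X(u)}$ and the right-hand side equals $F(X)_{R_X(v)}$. Therefore $F(X)_{R_X(u)} = F(X)_{R_X(v)}$, which is exactly the statement $R_X(u) \approx_{F(X)} R_X(v)$. Notice this direction does not use bijectivity at all.

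For the backward direction, I would suppose $R_X(u) \approx_{F(X)} R_X(v)$, i.e.\ $F(X)_{R_X(u)} = F(X)_{R_X(v)}$. By shift-invariance, this rewrites as $F(X_u) = F(X_v)$. Since $F$ is a bijection (in particular injective), this forces $X_u = X_v$, i.e.\ $u \approx_X v$.

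There is no real obstacle here: the lemma is essentially the observation that shift-invariance transports shift-equivalence forward along $F$, while injectivity transports it back. The only subtle point worth underlining in the write-up is that $\approx_X$ is defined purely in terms of equality of pointed graphs modulo, so the identities $F(X_u)=F(X)_{R_X(u)}$ convert statements about $X$-vertices into statements about $F(X)$-vertices without any further bookkeeping on paths; the second clause of shift-invariance (about composition $R_X(u.v)$) is not needed.
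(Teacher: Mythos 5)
Your proof is correct and follows essentially the same argument as the paper's: the paper also chains $X_u=X_v \Leftrightarrow F(X_u)=F(X_v)$ (using injectivity of $F$ for the nontrivial direction) with the shift-invariance identity $F(X_w)=F(X)_{R_X(w)}$. You merely spell out the two directions separately and note that injectivity is only needed going backward, which is consistent with the paper's one-line equivalence chain.
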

\begin{proof}
$u\sim_X v$ expresses $X_u=X_v$, which by bijectivity of $F$ is equivalent to  $F(X_u)=F(X_v)$ and hence $F(X)_{R_X(u)}=F(X)_{R_X(v)}$. This in turn is expressed by $R_X(u) \sim_{F(X)} R_X(v)$. \hfill $\Box$
\end{proof}

\begin{lemma}[Invertible dynamics preserve the number of shift-equivalence classes]\label{lem:invert2}
Let $(F,R_\bullet)$ be a causal graph dynamics over $\mathcal{X}_\pi$, such that $F$ is a bijection. Then for all finite graph $X$, we have $|X_{\diagup_\sim}|=|F(X)_{\diagup_\sim}|$.
\end{lemma}
\begin{proof}
\noindent $\bullet$  $|X_{\diagup_\sim}|\leq|F(X)_{\diagup_\sim}|$: 
 Let us assume that there exists a graph $X$ such that $|X_{\diagup_\sim}|>|F(X)_{\diagup_\sim}|$. As $F$ is shift-invariant, to each possible way of pointing the graph $X$ correspond a way of pointing the graph $F(X)$. As there are less ways of pointing $F(X)$ than ways of pointing $X$, there must exist $u$ and $v$ in $X$ such that $F(X_u)=F(X_v)$ and $u$ and $v$ non-equivalent, which contradicts the injectivity of $F$.
 
\noindent $\bullet$  $|X_{\diagup_\sim}|\geq|F(X)_{\diagup_\sim}|$:  Let us assume that there exists a graph $Y$ such that $|Y_{\diagup_\sim}|<|F(Y)_{\diagup_\sim}|$. Let us consider the graph sequence $(Y(k))_{k\in\mathbb{N}}$, $Y(k)=F^{-k}(Y)$. The natural number sequence $|Y(k)_{\diagup_\sim}|$ is decreasing (else it would contradict the first $\bullet$), thus it converges and reaches its limit. More precisly, there exists a rank $n_0$ such that for all $n>n_0$, $|Y(n+1)_{\diagup_\sim}|=|Y(n)_{\diagup_\sim}|=|F(Y(n+1))_{\diagup_\sim}|$. This new sequence $(Y(n))$ is infinite and, by bijection of $F$, $|Y(n)|$ is unbounded. Let us extract an infinite subsequence $(Y(l))$ such that, for all $l$, $|Y(l)| > |F(Y(l))|$ and $|Y(l+1)| > |Y(l)|$. All the graphs in this sequence have the same number $b$ of shift-equivalence classes. \\
Let $r$ be the radius of a local rule inducing $F$. We can now extract a subsequence $Y(m)$ such that all graphs of this sequence contain exactly the same neighborhoods of radius $r$. This is possible as there are finitely many neighbourhoods of radius $r$ and only $b$ neighbourhoods to pick. As they contain the same neighbourhoods, $f$ will act the same on all these graphs and thus they ought to decrease their sizes at the same speed: $|F(Y(m))| < |Y(m)|$ and $|F(Y(m))| = \alpha.|Y(m)|$ for some $\alpha < 1$ depending only on $F$.

We can now apply an asymetric extension on a large enough graph $Y(m)$, yielding an asymetric graph ${}^\Box Y(m)$. Using the bounded inflation lemma, we can bound the size of this graph: $|F({}^\Box Y(m))| < |F(Y(m))| + C$ for some constant $C$ depending only on $F$. Thus we have $|F({}^\Box Y(m))| < \alpha.|Y(m)| + C $. Since $\alpha < 1$, we can derive the inequality $|F({}^\Box Y(m))| < |Y(m)| = |{}^\Box Y(m)|$ for a large enough $Y(m)$. This inequality contradicts the first $\bullet$, thus $|X_{\diagup_\sim}|\geq|F(X)_{\diagup_\sim}|$. 
\end{proof}

Shift-symmetry is fragile however, and can be destroyed by adding a few vertices to a graph:

% \begin{lemma}[Properties of asymetric extensions]\label{lem:primalext}
% Any asymetric extension ${}^\Box X$ is asymmetric and $|V({}^\Box X)|-|V(X)|=o(|V(X)|)$ when the number of vertices $|V(X)|$ of $X$ tends toward infinity.%, where $|V(X)|$ is the number of vertices of the graph.
% \end{lemma}

%\begin{lemma}[Properties of asymetric extensions]\label{lem:primalext}
%Any asymetric extension ${}^\Box X$ is asymmetric.
%\end{lemma}
%
%\begin{proof}
%As ${}^\Box X$ has a prime number of vertices, by lemma \ref{lemma:iso}, its has either one single equivalence class of maximal size or only trivial equivalence classes. As the ${}^{\Box}.$ operation adds at least 2 vertices and that these vertices have different degree ($1$ for the last vertex on the line, and $2$ for its only neighbour), ${}^\Box X$ contains at least two non equivalent vertices, hence the first result. \hfill $\Box$
%\end{proof}

Using this fact, one can show that the cases of node creation and deletion in invertible CGD are all of finitary nature, i.e. they can no longer happen for large enough graphs. Indeed, by supposing a big enough graph $X$ whose order is changed through the application of an invertible CGD, and then looking at what would happen to its asymetric extension ${}^\Box X$, we can show that this would contradict continuity. We obtain:

\begin{theorem}[Invertible implies almost-vertex-preserving]\label{th:preserv}
Let $(F,R_\bullet)$ be a CGD over $\mathcal{X}_{\Sigma,\Delta,\ports}$, such that $F$ is a bijection. Then there exists a bound $p$, such that for any graph $X$, if $|V(X)| > p$ then $R_X$ is bijective.
\end{theorem}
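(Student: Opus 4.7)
The approach is a proof by contradiction that uses primal extensions to reduce the general case to the asymmetric one. The guiding intuition is that on an asymmetric graph $Y$, invertibility of $F$ together with Lemma~\ref{lem:invert1} already forces $R_Y$ to be bijective; and since a primal extension is asymmetric (Lemma~\ref{lem:primalext}) but differs from $X$ only by $o(|V(X)|)$ vertices localized at a single site, continuity should be able to propagate any hypothetical failure of bijectivity on $X$ to ${}^\Box X$, yielding a contradiction.

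First I would establish the asymmetric case: if $Y$ is asymmetric, then $R_Y$ is bijective. Injectivity is immediate from Lemma~\ref{lem:invert1}, since $R_Y(u)=R_Y(v)$ implies $u\approx_Y v$, and hence $u=v$. Surjectivity is subtler: if $w'\in V(F(Y))\setminus \im R_Y$, boundedness places $w'$ within distance $b$ of some $u'=R_Y(u)$, and one argues, using shift-invariance and the bijectivity of $F$, that the presence of such an extra vertex in $F(Y)$ would let us exhibit two distinct graphs with the same $F$-image (by locally perturbing the state at $u$), contradicting bijectivity of $F$.

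Now suppose, for contradiction, that no bound $p$ exists, so there are arbitrarily large $X$ with $R_X$ not bijective. Let $n$ be the continuity radius associated with $m=b$, where $b$ is the boundedness constant. If $|V(X)|$ is chosen sufficiently large compared to $n$ and to the extra $o(|V(X)|)$ vertices produced by the extension, one can locate a witness of non-bijectivity (either a merged pair $u\approx_X v$, or an unreached vertex $w'$ together with its nearby preimages) and then perform the primal extension at a site whose radius-$n$ neighborhood is disjoint from the radius-$n$ ball around the witness. Continuity then guarantees that $R_{{}^\Box X}$ agrees with $R_X$ on that ball, so the same failure of bijectivity persists in ${}^\Box X$, contradicting the asymmetric case applied to ${}^\Box X$.

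The main obstacle is ensuring that the witnesses can always be placed far from the extension site. In the non-injective case, Lemma~\ref{lemma:iso} guarantees that all shift-equivalence classes of $X$ have equal size, so if $|V(X)|$ is large there are many merged pairs distributed throughout $X$, and at least one such pair lies outside the continuity neighborhood of the extension point; shift-invariance of $R_\bullet$ then guarantees that $R_{{}^\Box X}(u)$ and $R_{{}^\Box X}(v)$ still coincide in $F({}^\Box X)$. The surjectivity-failure case is treated analogously, using that by boundedness unreached vertices $w'$ must cluster near $\im R_X$, so their existence is already witnessed locally in $X$ and survives the extension.
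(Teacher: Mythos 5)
Your reduction hinges on the lemma ``if $Y$ is asymmetric then $R_Y$ is bijective'', and its surjectivity half is both unproven and, as stated, false. Asymmetry of the \emph{source} puts no obstruction on vertex creation: Lemma \ref{lem:invert1} only constrains shift-equivalences, and a freshly created vertex of $F(Y)$ violates no such constraint. The paper's own turtle example (Figure \ref{fig:turtle}) makes this concrete: the smaller graph of the oscillating pair is asymmetric (a single vertex has only a trivial class), yet it evolves into a strictly larger graph, so $R$ on it cannot be surjective. Your sketched justification --- that an unreached vertex $w'$ would let you perturb the state at $u$ and exhibit two distinct preimages of the same graph --- supplies no mechanism for such a collision and would ``prove'' this false statement. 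The asymmetry that actually matters for surjectivity is that of the \emph{image}, and this is precisely where the paper does something your proposal does not: in the non-surjective case it primal-extends $F(X(n))$, not $X(n)$, at a vertex far from the origin, and pulls the extension back through $F^{-1}$, setting $Y(n)=F^{-1}({}^\Box F(X(n)))$. Continuity of $F^{-1}$ (available from compactness of $\mathcal{X}_{\Sigma,\Delta,\pi}$ together with bijectivity and continuity of $F$) and uniform continuity of $R_\bullet$ give $Y(n)^r=X(n)^r$ and $R^b_{Y(n)^r}=R^b_{X(n)^r}$, while $F(Y(n))$ is asymmetric by construction, forcing $v'\in\im R^b_{Y(n)^r}$ and contradicting $v'\notin\im R_{X(n)}$. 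Without this ``extend the image and pull back'' manoeuvre (or an independent argument that large graphs with asymmetric image have surjective $R$), your surjectivity branch collapses.

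Two further points. Your injectivity branch is essentially the paper's argument and is sound once tidied: rather than placing a merged pair ``outside the continuity neighborhood of the extension point'', use shift-invariance to put the pair at the origin (so $R_X(v)=R_X(\varepsilon)=\varepsilon$), uniform continuity to bound $|v|$ by some $l$ independent of $n$, and perform the extension at maximal distance from $\varepsilon$; Lemma \ref{lem:invert1} applied to the asymmetric ${}^\Box X(n)$ then yields the contradiction. However, your proof never covers infinite graphs: the negation of the theorem is satisfied by a single infinite $X$ with $R_X$ non-bijective (it exceeds every candidate $p$), and the primal extension --- adding vertices up to the next prime --- is only defined for finite graphs, so your contradiction machinery does not apply there. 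The paper handles infinite $X$ by a separate reduction to finite disks: choose $r$ larger than $p$ and than $|v|$ (resp.\ $b$), use continuity of $R_\bullet$ so that the failure of injectivity (resp.\ surjectivity) persists for $R_{X^r}$, and contradict the finite case. Finally, note that continuity as defined is pointwise in $X$; the uniform radii you invoke require compactness, which the paper uses explicitly.
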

\begin{proof} When $|\pi|\le 1$, $\mathcal{X}_{\Sigma,\Delta,\ports}$ is finite so the theorem is trivial. So we assume in the rest of the proof that $|\pi|>1$. \\
\noindent[Finite graphs]  First we prove the result  for any finite graph. % Then, using lemma \ref{lem:finitevp}, the general result will follow. \\
By contradiction,  assume that there exists a sequence of finite graphs $(X(n))_{n\in\mathbb{N}}$ such that $|V(X(n))|$ diverges and such that for all $n$, $R_{X(n)}$ is not bijective. As this sequence is infinite, we have that one of the two following cases is verified for an infinite number of $n$:
\begin{itemize}
\item[$\bullet$] $R_{X(n)}$ is not surjective,
\item[$\bullet$] $R_{X(n)}$ is not injective.
\end{itemize}
\noindent $\bullet$ $[R_{X(n)}$ not surjective$]$. There exists a vertex $v' \notin \textrm{Im}(R_{X(n)})$. Without loss of generality, we can assume that $|v'|<b$ where $b$ is the bound from the boundedness property of $F$. We will now consider a particular asymetric extension of $F(X(n))$, ${}^\Box F(X(n))$, where the chosen vertex in $F(X(n))$ is the furthest away from the pointed vertex $\varepsilon$. Indeed, if $F(X(n))$ is large enough, a vertex lying at maximal distance of $\varepsilon$ in $F(X(n))$ either has a free port or is part of a cycle, and thus is a valid vertex to perform the asymetric extension. Indeed, if this vertex has no free port, then any of its edge can be removed without splitting the graph, as it would contradict its maximality -- therefore it is in a cycle. Now, consider the graph $Y(n)=F^{-1}({}^\Box F(X(n)))$. Using uniform continuity of $F^{-1}$ and $R_\bullet$, and the fact that $|V(X(n))|$ is as big as we want, we have that there exists an index $n$ and a radius $r$ such that $Y(n)^r=X(n)^r$ and $R^b_{Y(n)^r}=R^b_{X(n)^r}$. 

% XXXXXXXX Using lemma \ref{lem:invert2}, and since $F(Y(n))$ is asymetric by construction, we have that $v'\in \textrm{Im}(R_{Y(n)^r}^b)$ which contradicts $v' \notin \textrm{Im}(R_{X(n)})$.

Since $F(Y(n))$ is asymetric by construction, we have that $|F(Y)_{\diagup_\sim}|=|F(Y)|$. Applying lemma \ref{lem:invert2} gives us that $|Y_{\diagup_\sim}|=|F(Y)|$. In particular, this means that there are as many ways to place the pointer in $Y$ as there are vertices in $F(Y)$. Using the injectivity of $F$, we have that all vertices of $F(Y)$ are reached by $R_Y$, thus $v'\in \textrm{Im}(R_{Y(n)^r}^b)$. This contradicts $R^b_{Y(n)^r}=R^b_{X(n)^r}$. 

\noindent $\bullet$ $[R_{X(n)}$ not injective$]$. There exist two vertices $u,v \in X(n)$ such that $R_{X(n)}(u)=R_{X(n)}(v)$ and $u\neq v$. Without loss of generality, we can assume that $u =\varepsilon$ as $F$ is shift-invariant. According to lemma \ref{lem:invert1}, we have that $\varepsilon\sim_X v$. 

% Moreover, using the uniform continuity of $R_\bullet$, we have that, as $R_{X(n)}(v)=R_{X(n)}(\varepsilon)=\varepsilon$, there exists a radius $l$, which does not depend on $n$, such that $|v| <l$ . 

Moreover, applying the uniform continuity of $R_\bullet$ with $m=0$, we have that there exists a bound $l$ such that for all graph $X$ and all vertex $v$, $R_X(v)=\varepsilon$ implies $v\in X^l$.

Let us consider a asymetric extension of $X(n)$, ${}^\Box X(n)$, where the asymetric extension has been performed at maximal distance from $\varepsilon$, by the same argument as in the previous $\bullet$. In this graph, $\varepsilon$ and $v$ are not shift-equivalent and thus, $R_{{}^\Box X(n)}(\varepsilon)\neq R_{{}^\Box X(n)}(v)$.  By continuity of $R_\bullet$, we have that there exists a radius $r>l$ such that $R_{{}^\Box X(n)^r}^0=R_{ X(n)^r}^0$ for a large enough $n$, hence $R_{{}^\Box X(n)^r}^0(v)=R_{ X(n)^r}^0(v)=\varepsilon$, which contradicts $R_{{}^\Box X(n)}(\varepsilon)\neq R_{{}^\Box X(n)}(v)$.

~\\
\noindent [Infinite graphs]. Now we show that the result on finite graphs can be extended to infinite graphs, proving that for any infinite graph $R_X$ is bijective:

\noindent $\bullet$ $[R_X$ injective$]$. By contradiction. Take $X$ infinite such that there is $u\neq v$ and $R_X(u)=R_X(v)$. Without loss of generality we can take $u=\varepsilon$, i.e. $v\neq \varepsilon$ and $R_X(v)=\varepsilon$. By continuity of $R_\bullet$, there exists a radius $r$, which we can take larger than $|v|$ and $p$, such that $R_{X}=R_{X^r}$. Then $R_{X^r}(v)=R_X(v)=\varepsilon$, thus $R_{X^r}$ is not injective in spite of $X^r$ being finite and larger than $p$, leading to a contradiction. 

\noindent $\bullet$ $[R_X$ surjective$]$. By contradiction. Take $X$ infinite such that there is $v'$ in $F(X)$ and $v'\notin \textrm{Im}(R_X)$. By boundedness, there exists $u'\in F(X)$ such that $u'$ lies at distance less than $b$ of $v'$. Using shift-invariance, we can assume without loss of generality that $u'=\varepsilon$, hence, $|v'|<b$. By continuity of $R_\bullet$, there exists a radius $r$, which we can take larger than $p$, such that the images of $R_{X}$ and $R_{X^r}$ coincide over the disk of radius $b$. Then, $v'\notin \textrm{Im}(R_X)$ implies $v'\notin \textrm{Im}(R_{X^r})$, thus $R_{X^r}$ is not surjective in spite of $X^r$ being finite and larger than $p$, leading to a contradiction. \hfill $\Box$
\end{proof}
In \cite{ArrighiCGD,ArrighiIC}, Dowek and one of the authors reached an even more restrictive result: plain vertex-preservation. This is because their graphs are non-modulo, i.e. every vertex has a unique name. Thus being invertible in their setting meant being able to invert each of these names, which is more stringent. It was thus unclear to us whether vertex-preservation was a consequence of their stringent setting, or a more general fact. 

\section{Reversible Causal Graph Dynamics}\label{sec:reversibility}
A Reversible CGD (RCGD for short) is an invertible CGD whose inverse is also a CGD:
%A CGD $(F,R_\bullet)$ is said reversible if it is invertible {\em and} its inverse a CGD itself:

\begin{definition}[Reversible]
A CGD $(F,R_{\bullet})$ is {\em reversible} if there exists $S_\bullet$ such that $(F^{-1} ,S_{\bullet})$ is a CGD.
\end{definition}

Theorem \ref{th:preserv} shows that invertible  CGD are almost vertex-preserving. Notice that vertex-preservation guarantees that the inverse of a shift-invariant dynamics is a shift-invariant dynamics.

 %assumption is almost without loss of generality.
 %Moreover, the following result can be proved by writing the definition of shift-invariance of the operator $S_\bullet$:

 %Theorem \ref{th:preserv} shows that vertex-preservation assumption is almost without loss of generality.
 %Moreover, the following result can be proved by writing the definition of shift-invariance of the operator $S_\bullet$:
\begin{lemma}%[Vertex-preserving and shift-invariant reversibility]
If  $(F,R_\bullet)$ is an invertible, shift-invariant dynamics such that for all $X$, $R_X$ is a bijection, then $(F^{-1},S_\bullet)$ is a shift-invariant dynamics, with $S_{Y}=(R_{F^{-1}(Y)})^{-1}$.
\end{lemma}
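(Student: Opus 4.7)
The plan is to unfold the definitions: $S_Y$ is a bijection $V(Y)\to V(F^{-1}(Y))$ by hypothesis (since $R_{F^{-1}(Y)}$ is a bijection $V(F^{-1}(Y))\to V(F(F^{-1}(Y)))=V(Y)$), so $(F^{-1},S_\bullet)$ is at least a well-defined dynamics in the sense of Definition~\ref{def:dynamicsmodulo}. It then remains to verify the two shift-invariance equations, each of which reduces to a direct transport of the corresponding equation for $(F,R_\bullet)$ through the bijection $F$.

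First I would show $F^{-1}(Y_u)=F^{-1}(Y)_{S_Y(u)}$. Writing $X:=F^{-1}(Y)$ and $w:=S_Y(u)=R_X^{-1}(u)$, we have $R_X(w)=u$. Shift-invariance of $F$ gives
\[
F(X_w)=F(X)_{R_X(w)}=Y_u,
\]
so applying $F^{-1}$ (which is well-defined since $F$ is a bijection) yields $X_w=F^{-1}(Y_u)$, i.e.\ exactly what we want.

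Next I would verify $S_Y(u.v)=S_Y(u).S_{Y_u}(v)$. With $X$ and $w$ as above, the previous step gives $F^{-1}(Y_u)=X_w$, so by definition $S_{Y_u}=R_{X_w}^{-1}$. Setting $v':=S_{Y_u}(v)=R_{X_w}^{-1}(v)$, the second shift-invariance axiom for $R_\bullet$ gives
\[
R_X(w.v')=R_X(w).R_{X_w}(v')=u.v,
\]
and hence $S_Y(u.v)=R_X^{-1}(u.v)=w.v'=S_Y(u).S_{Y_u}(v)$, as required.

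I do not anticipate a real obstacle: the only subtlety is to check that every composition of maps is meaningful (in particular that $R_X^{-1}$ makes sense, which uses the bijectivity hypothesis on $R_\bullet$, and that $v'\in X_w$ so that concatenation $w.v'$ is a legitimate path in $X$). Once one is careful that after each shift the base graph changes accordingly, the whole argument is a pair of one-line transports across the bijection $F$.
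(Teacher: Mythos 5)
Your proposal is correct and follows essentially the same route as the paper: both verify the two shift-invariance identities for $(F^{-1},S_\bullet)$ by pulling everything back through the bijections $F$ and $R_X$, using $F(X_w)=F(X)_{R_X(w)}$ for the first identity and $R_X(w.v')=R_X(w).R_{X_w}(v')$ for the second. Your explicit bookkeeping (setting $X=F^{-1}(Y)$, $w=R_X^{-1}(u)$, $v'=R_{X_w}^{-1}(v)$, and noting $S_{Y_u}=R_{X_w}^{-1}$ via the first identity) is just a slightly more spelled-out version of the paper's chain of equalities.
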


\proof{
Consider $Y$ and $u'.v'\in Y$. Take $X$ and $u.v\in X$ such that $F(X)=Y$, $R_X(u)=u'$ and $R_X(u.v)=u'.v'$. We have:
${F^{-1}(Y_{u'})} ={F^{-1}(F(X)_{R_X(u)})}={F^{-1}(F(X_u))}={X_{(R_X)^{-1}(u')}}={F^{-1}(Y)_{S_Y(u')}}$. 
%\end{align*}
%\begin{align*}
% \displaystyle{F^{-1}(Y_{u'})} &  \displaystyle{=F^{-1}(F(X)_{R_X(u)})}\\
%                &  \displaystyle{=F^{-1}(F(X_u))}\\              
%                &  \displaystyle{=X_{(R_X)^{-1}(u')}}\\
%                &  \displaystyle{=F^{-1}(Y)_{S_Y(u')}}
%\end{align*}
Moreover, take $v\in X_u$ such that $R_X(u.v)=R_X(u).R_{X_u}(v)=u'.v'$. We have: 
 $S_Y(u'.v') =(R_X)^{-1}(R_X(u.v)) =u.v =(R_{X})^{-1}(u').(R_{X_u})^{-1}(v')= S_Y(u').S_{Y_{u'}}(v')$.
%\begin{align*}
% S_Y(u'.v') &=(R_X)^{-1}(R_X(u.v))\\
% & =u.v\\
% & =(R_{X})^{-1}(u').(R_{X_u})^{-1}(v')\\
% & = S_Y(u').S_{Y_{u'}}(v')
%\end{align*}
~\qed}
 
We are now ready to prove our first main result, which states that the inverse of a causal graph dynamics is a causal graph dynamics:
\begin{theorem}[Invertible implies reversible]\label{th:main}
If $(F,R_\bullet)$ is an invertible CGD, then $(F,R_\bullet)$ is reversible.
\end{theorem}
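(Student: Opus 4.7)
The plan is to show that $F^{-1}$, equipped with a suitably defined map $S_\bullet$, satisfies the three axioms of a CGD: shift-invariance, continuity, and boundedness. Two ingredients drive the construction. Theorem \ref{th:preserv} gives that $R_X$ is bijective for all but finitely many graphs, and the lemma preceding the theorem shows that on this ``bulk'' set, the choice $S_Y := (R_{F^{-1}(Y)})^{-1}$ makes $(F^{-1}, S_\bullet)$ shift-invariant. For the finitely many exceptional small graphs where $R_{F^{-1}(Y)}$ is not bijective, I would extend $S_Y$ by choosing a canonical right-inverse: if $R_{F^{-1}(Y)}(u) = R_{F^{-1}(Y)}(v)$ with $u \neq v$, then Lemma \ref{lem:invert1} forces $u \approx v$, hence $F^{-1}(Y)_u = F^{-1}(Y)_v$, so any consistent tie-breaking (e.g.\ the path-lexicographically smallest representative in each fiber) preserves shift-invariance on these finitely many cases by direct inspection.

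Second, I would establish continuity of $F^{-1}$ using the compactness of $\mathcal{X}_{\Sigma,\Delta,\pi}$, the feature that motivated the topological setup of Section \ref{sec:AssociatedGraphs}. Since $F$ is a continuous bijection from a compact metric space to itself (hence to a Hausdorff space), $F^{-1}$ is automatically continuous, which in the disk-equality formulation of Definition \ref{def:continuitymodulo} yields the first clause required for $F^{-1}$: for every $Y$ and $m$ there is $n$ such that $Y^n = Z^n$ implies $F^{-1}(Y)^m = F^{-1}(Z)^m$. For the second clause involving $S_\bullet$, given $Y$ and $m$, combine this with the boundedness constant $b$ of $F$ to find $n_1$ with $F^{-1}(Y)^{m+b}$ determined by $Y^{n_1}$, then use continuity of $R_\bullet$ on $F^{-1}(Y)^{m+b}$ to find $n_2$, and finally invert locally to conclude that $S_Y^m$ depends only on $Y^{\max(n_1,n_2)}$.

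Third, for boundedness of $F^{-1}$: by Theorem \ref{th:preserv}, $S_Y$ is surjective on every $Y$ with $|V(F^{-1}(Y))| > p$, giving boundedness with constant $0$ on the bulk; for the finitely many exceptional small graphs, any common upper bound on their diameter works. The main obstacle throughout is the careful treatment of the finitely many exceptional small graphs: the global $S_\bullet$ must be defined on them so as to preserve shift-invariance, and the continuity bounds must be uniform across both the bulk case and these exceptional cases. The continuity definition accommodates this precisely because such exceptional graphs have bounded size and therefore cannot appear as the $n$-disk of a sufficiently large graph, so they cannot disturb the uniform bound once $n$ is large enough---reducing the whole verification on small graphs to a finite (in principle, case-by-case) check that is consistent with the bulk behaviour.
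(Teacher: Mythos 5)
Your skeleton matches the paper's on the bulk: continuity of $F^{-1}$ from compactness, boundedness from Theorem \ref{th:preserv} (plus finiteness of the exceptional graphs), and $S_{F(X)}=R_X^{-1}$ whenever $|V(X)|>p$. The genuine gap is in your treatment of the finitely many small graphs. There, $R_X$ can fail to be bijective in two ways, and your ``canonical right-inverse with tie-breaking in each fiber'' only addresses the failure of injectivity (vertex merging). It says nothing about the failure of surjectivity (vertex creation, as in the turtle example of Figure \ref{fig:turtle}): for a vertex $v'\in F(X)$ with $v'\notin \im R_X$ there is no fiber at all, yet $S_{F(X)}(v')$ must still be defined and must satisfy $F^{-1}(F(X)_{v'})=X_{S_{F(X)}(v')}$. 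This is exactly where the paper does its work: it sets $S_{F(X)}(\tilde{v'})=v$ for some $v$ with $R_X(v)\approx_{F(X)} v'$, i.e.\ it defines $S$ on shift-equivalence classes and exploits $\approx$-compatibility (via Lemma \ref{lem:invert1}) so that the first shift-invariance condition holds even for created vertices.

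Even for the injectivity failures, the claim that a path-lexicographically smallest representative ``preserves shift-invariance by direct inspection'' is not a proof and is dubious as a canonical rule: the second shift-invariance axiom requires the coherence $S_Y(u'.v')=S_Y(u').S_{Y_{u'}}(v')$, which couples the choice made at $Y$ with the choices made at every shifted graph $Y_{u'}$, and a lexicographic rule is computed relative to the origin, so it need not commute with re-pointing. The paper enforces this coherence explicitly: having fixed the choices at $F(X)$, it propagates them to all shifts by demanding $S_{F(X)_{u'}}(\widetilde{\overline{u'}.v'})=\overline{u}.v$, and then verifies shift-invariance and $\approx$-compatibility of the resulting family. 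Without this step, or an equivalent argument that a globally consistent family of choices exists across the (finite, but interlinked by shifts) set of exceptional graphs, your construction of $S_\bullet$ is incomplete; the remaining ingredients of your proposal (continuity of $S_\bullet$ from continuity of $R_\bullet$ and finiteness of $p$, boundedness of $F^{-1}$) are in line with the paper.
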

\begin{proof}

We must construct $S_\bullet$. For $|F(X)|=|V(X)|>p$, we know that $R_X$ is bijective and we let $S_{F(X)}=R_X^{-1}$. For $|V(X)|\leq p$, we will proceed in two steps. First, we will construct an appropriate $S_{F(X)}$ for $X$. Second, we will make consistent choices for $S_{F(X)_{u'}}$ so that $S_\bullet$ is shift invariant. \\ 
We write $\tilde{u}$ for the shift-equivalence class of $u$ in $X$, and $\tilde{v'}$ for the shift-equivalence class of $v'$ in $F(X)$. For all $v'\in F(X)$, we make the arbitrary choice $S_{F(X)}(\tilde{v'})=v$, where $v$ is such that its image $R_X(v)$ is shift equivalent to $v'$ in $F(X)$, i.e.  $R_X(v)\sim_{F(X)} v'$. For this $X$, we have enforced $\sim$-compatibility. Then we make consistent choices for $S_{F(X)_{u'}}$. This is obtained by demanding that $S_{F(X)_{u'}}(\widetilde{\overline{u'}.v'})=\overline{u}.v$. Indeed, this accomplishes shift-invariance because $S_{F(X)_{u'}}(v')=S_{F(X)_{u'}}(\overline{u'}.u'.v')=\varepsilon.v'=v'$ implying the equality: $S_{F(X)}(u'.v')=u.v=S_{F(X)}(u').S_{F(X)_{u'}}(v')$. Moreover, $S_{F(X)_{u'}}$ is itself shift-invariant because: $S_{F(X)_{u'.v'}}(w')=S_{F(X)_{u'.v'}}(\overline{u'.v'}.u'.v'.w')=\overline{u.v}.u.v.w=w$, and $S_{F(X)_{u'}}(v')=v$ implying that $S_{F(X)_{u'}}(v'.w')=v.w=S_{F(X)_{u'}}(v').S_{F(X)_{u'.v'}}(w')$ , and $\sim$-compatible because $v'\sim w'$ implies  $S_{F(X)_{u'}}(v')=S_{F(X)_{u'}}(w')$, and thus $S_{F(X)_{u'}}(v')\sim S_{F(X)_{u'}}(w')$.\\ 
Continuity of the constructed $S_\bullet$ is due to the continuity of $R_\bullet$ and the finiteness of $p$.\\
Shift-invariance of $(F^{-1},S_\bullet)$ follows from $\sim$-compatibility of $S_\bullet$ and shift-inva\-riance of $(F,R_\bullet)$, because $F^{-1}(F(X)_u')=X_v$ where $v$ is such that $R_X(v)\sim u'$, hence $F^{-1}(F(X)_u')=X_{S_{F(X)}(u')}$. \qed

Continuity of $F^{-1}$ is directly given by the continuity of $F$ together with the compactness of $\mathcal{X}_{\Sigma,\Delta,\pi}$. Its boundedness derives either from the bijectivity of $R_X$ for $|V(X)|>p$ or from the finiteness of $X$ when $|V(X)|>p$.

\end{proof}

Notice that, ultimately, this result crucially relies on the compactness of $\mathcal{X}_{\Sigma,\Delta,\pi}$ which in turn relies on the boundedness of the degree $|\pi|$ and the finiteness of the internal states $\Sigma$ and $\Delta$. A contrario in \cite{ArrighiCGD,ArrighiIC}, Dowek and one of the authors reached a similar result, but with a much more lengthy proof, due to the lack of a compact metric space. Their setting was also more stringent, in the sense their graphs are non-modulo, and thus being invertible in their setting meant being able to invert each vertex name --- which is much to ask for. It was thus unclear to us whether reversibility was a consequence of their stringent setting, or a more general fact, that Theorem \ref{th:main} has now established.

\section{Block representation of reversible causal graph dynamics}\label{sec:blocks}

A famous result on RCA \cite{KariBlock}, is that these admit a finite-depth, reversible circuit form, with gates acting only locally. The result carries through to Quantum CA \cite{ArrighiUCAUSAL}. In order to apply these ideas to Reversible CGD, we must first make it clear what we mean by local operations. Afterwards, we will show that conjugating a local operation with an RCGD still yields a local operation, and deduce the block representation from this property.

\subsection{Locality}

Causal Graph Dynamics change the entire graph in one go---the word causal only referring to the constraint that information does not propagate too fast. Local operations, on the other hand, act just in one bounded region of the graph, leaving all of the rest unchanged. Here is the definition of being `local':
\begin{definition}[Local dynamics]\label{def:local}
A dynamics $(L,S_\bullet)$ is {\em $r'$-local} if it is uniformly continuous and bounded, and there exists $r'$ such that for all $X$ and $u'\in L(X)$ with $|u'|>r'$, there exists $u \in X$ such that we have both $L(X)^0_{u'}=X^0_u$ and for all $v\in X^0_u$,$\,S_X(u.v)=u'.v$.
\end{definition}

A local operation can also be shifted to act over the region surrounding some vertex $u$. The details of the next definition become apparent with Figure \ref{fig:shifteddyn}.
\begin{definition}[Shifted dynamics]\label{def:shifted}
Consider a dynamics $(L,S_\bullet)$ and some $u\in \Pi^*$. We define $L_u$ to be the map $X\mapsto(L(X_u))_{S_{X_u}(\overline{u})}$ if $u\in X$, and the identity otherwise. We define $S_{u,X}$ to be the map $v\mapsto \overline{S_{X_u}(\overline{u})}.S_{X_u}(\overline{u}.v)$ if $u\in X$, and the identity otherwise. We say that $(L_u,S_{u,\bullet})$ is $(L,S_\bullet)$ {\em shifted at $u$}.
\end{definition}

\begin{figure}[h!]
\begin{center}
\includegraphics[scale=0.8]{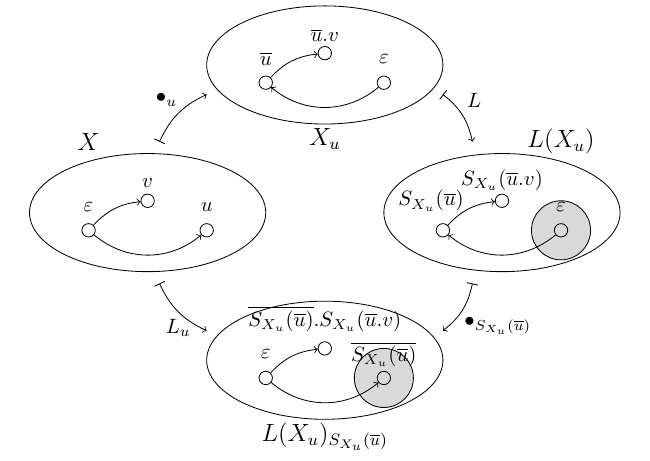}
\end{center}
\caption{\label{fig:shifteddyn}Shifted dynamics $L_u$. In the bottom graph $L_u(X)$, former vertex $v$ has name $\overline{S_{X_u}(\overline{u})}.S_{X_u}(\overline{u}.v)$.}
\end{figure}

The following two lemmas are technical and somewhat expected, but will turn out useful next.

\begin{lemma}[Bounded inflation]\label{lem:boundedinflation}
If $(L,S_\bullet)$ is $r'$-local, then for all $s$ there exists $s'$ such that for all $X$ and $v\in X$, if $|v|\leq s$, then $|S_X(v)|\leq s'$.
\end{lemma}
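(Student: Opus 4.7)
The plan is to argue by contradiction using compactness of $\mathcal{X}_{\Sigma,\Delta,\pi}$ together with the uniform continuity built into $r'$-locality; the geometric locality condition itself (the assertion about $u'\in L(X)$ with $|u'|>r'$) will not be needed explicitly, only continuity will.

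First I would negate the conclusion: some $s$ has the property that for every candidate bound $s'\in\mathbb{N}$ one can exhibit a graph $X_{s'}$ and a vertex $v_{s'}\in X_{s'}$ with $|v_{s'}|\le s$ and $|S_{X_{s'}}(v_{s'})|>s'$. Because only finitely many paths in $(\pi^2)^*$ have length at most $s$, a pigeonhole argument lets me pass to a subsequence on which $v_{s'}$ is constant, equal to some fixed path $v$. Compactness of $\mathcal{X}_{\Sigma,\Delta,\pi}$ then produces a further subsequence $X_{n_k}\to X^*$, and for $k$ large enough $X_{n_k}^s=(X^*)^s$, so $v$ designates an actual vertex of $X^*$ and $u^*:=S_{X^*}(v)$ is a well-defined vertex of $L(X^*)$ of some finite length $M$.

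Next I would close the contradiction by invoking continuity at $X^*$ with parameter $m:=M$: there exists $N$, which we may take $\ge s$, such that $Y^N=(X^*)^N$ implies $S_Y^m=S_{X^*}^m$ as partial maps whose common domain is contained in $V((X^*)^N)$. For $k$ large, $X_{n_k}^N=(X^*)^N$, hence $S_{X_{n_k}}^m=S_{X^*}^m$. Since $v\in\dom S_{X^*}^m$ (because $|u^*|=M\le m$), equality of partial maps puts $v$ in $\dom S_{X_{n_k}}^m$ as well, giving $|S_{X_{n_k}}(v)|\le M$, which contradicts $|S_{X_{n_k}}(v)|>n_k\to\infty$.

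The main subtlety I anticipate is that the continuity condition is phrased via equality of partial maps $S_Y^m=S_{X^*}^m$ together with the separate containment $\dom\subseteq V(\cdot^N)$; I need to be sure this really forces the two domains to coincide, so that membership of $v$ transfers from the $X^*$-side to the $X_{n_k}$-side. This follows from the convention that two partial maps are equal exactly when they share a domain and agree on it, but it is worth flagging. Once that point is granted, the rest is the standard compactness dance: pointwise continuity at the limit $X^*$ already suffices, and the uniformity built into $r'$-locality is, in effect, free from compactness.
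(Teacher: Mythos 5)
Your argument is correct and is essentially the paper's own proof: both proceed by contradiction, extract a convergent subsequence using compactness of $\mathcal{X}_{\Sigma,\Delta,\pi}$, and use the (uniform) continuity of $S_\bullet$ to transfer the value of $S$ between the sequence and its limit, the only cosmetic difference being that you pigeonhole the vertex name $v$ at the start and pull the finite value $|S_{X^*}(v)|=M$ back to the $X_{n_k}$, whereas the paper pushes the unboundedness forward onto the limit graph. Your explicit handling of the partial-map equality $S_Y^m=S_{X^*}^m$ (equal domains plus agreement) is a fair reading of the continuity definition and matches what the paper implicitly uses.
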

\begin{proof}
Suppose the contrary: there exists $s$ such that for all $s'$, $X(s')$ has some $|v(s')|\leq s$ such that $|S_X(v)|> s'$.
Since ${\cal X}_{\Sigma, \Delta, \ports}$ is compact \cite{ArrighiCayley}, $X(s')$ admits a subsequence which converges to some limit $X$, in the sense that $X(s'_k)^k=X^k$. For this particular $X$, for any $s'$, there is some $|v(s')|\leq s$ such that $|S_X(v)|> s'$. This is because we can choose $k$ so that $s'_k\geq s'$ and $k$ superior to the radius needed to determine $L(X)^{s'_k}=L(X(s'_k))^{s'_k}$, so that $|S_X(v)|=|S_{X^k}(v)|=|S_{{X(s_k)}^k}(v)|=|S_{{X(s_k)}}(v)|>s'_k\geq s'$. Thus, there exists a point of $v\in X$ such that  $|S_X(v)|> \infty$, which is a contradiction.
\end{proof}

\begin{lemma}\label{lem:tothet}
If $(L,S_\bullet)$ is $r'$-local, for all $t$, for all $u'\in L(X)$ with $|u'|>r'+t+1$, there exists $u\in X$ with $S_X(u)=u'$ such that we have:
\begin{itemize}
\item[$\bullet$] $L(X)^t_{u'}=X^t_u$,
\item[$\bullet$] $\forall v\in X^t_u,\,S_X(u.v)=u'.v$.
\end{itemize}
\end{lemma}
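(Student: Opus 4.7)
The plan is to proceed by induction on $t$. The base case $t=0$ follows directly from the definition of $r'$-locality: since $|u'|>r'+1>r'$, locality supplies $u\in X$ with $L(X)^0_{u'}=X^0_u$ and $S_X(u.v)=u'.v$ for every $v\in X^0_u$; taking $v=\varepsilon$ then also gives $S_X(u)=u'$.

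For the inductive step, given $u'$ with $|u'|>r'+t+1$, the bound $|u'|>r'+t=r'+(t-1)+1$ activates the induction hypothesis at level $t-1$, yielding $u\in X$ with $L(X)^{t-1}_{u'}=X^{t-1}_u$ and $S_X(u.v)=u'.v$ for every $v\in X^{t-1}_u$. Now $X^t_u$ extends $X^{t-1}_u$ only by newly labelling the boundary vertices $u.w$ with $|w|=t$ and by adjoining the next layer of distance-$(t+1)$ vertices together with their incident edges and labels. For each such boundary path $w$, the bound $|u'.w|\geq|u'|-t>r'+1>r'$ permits invoking $r'$-locality at $u'.w$, producing some $\tilde u\in X$ with $L(X)^0_{u'.w}=X^0_{\tilde u}$ and $S_X(\tilde u.\alpha)=u'.w.\alpha$ for $\alpha\in X^0_{\tilde u}$. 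Once $\tilde u$ is identified with $u.w$, these local pieces glue across all boundary paths $w$ to upgrade $L(X)^{t-1}_{u'}=X^{t-1}_u$ to $L(X)^t_{u'}=X^t_u$ and propagate $S_X(u.v)=u'.v$ to every $v\in X^t_u$, closing the induction.

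The main obstacle is precisely this identification of $\tilde u$ with $u.w$, i.e. showing that $X^0_{u.w}=X^0_{\tilde u}$ in the pointed-graph-modulo quotient. Write $w=w_0.\alpha\beta$ with $|w_0|=t-1$: the induction hypothesis iso $L(X)^{t-1}_{u'}=X^{t-1}_u$ pins down the labels and incident edges of the back-neighbour $u'.w_0$ (reached from $u'.w$ along port $\beta$) to coincide with those of $u.w_0$ in $X$, while the locality iso $L(X)^0_{u'.w}=X^0_{\tilde u}$ forces exactly the same back structure to appear in $X^0_{\tilde u}$. This consistency, combined with the rigidity of path-based vertex designation in $\mathcal{X}_{\Sigma,\Delta,\pi}$, removes the ambiguity that the locality axiom leaves open (it asserts the existence but not the uniqueness of the witness $\tilde u$) and forces $X^0_{u.w}=X^0_{\tilde u}$; equivalently, $u.w$ itself is a valid witness of locality at $u'.w$, which is what the induction needed.
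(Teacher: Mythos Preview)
Your approach is essentially the same as the paper's: both argue by walking one edge at a time from $u'$ out to distance $t$, invoking $r'$-locality at each intermediate vertex $u'.w_1\ldots w_i$ (which is far enough from the origin since $|u'|>r'+t+1$), and threading the resulting local identities together. You package this as an induction on $t$; the paper unrolls the same iteration directly along a path $w_1\ldots w_k$, treating the first and second bullets separately. The difference is cosmetic.

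The one place worth tightening is exactly what you flag as the ``main obstacle'': showing that the locality witness $\tilde u$ obtained at $u'.w$ may be taken to be $u.w$. Your justification via ``rigidity of path-based vertex designation'' is not a precise argument in this framework (pointed graphs modulo do not come with a uniqueness clause in Definition~\ref{def:local}). The paper handles this step by the same edge-by-edge chaining---from $S_X(u.w_1\ldots w_i)=u'.w_1\ldots w_i$ and $|u'.w_1\ldots w_i|>r'$ it writes $S_X(u.w_1\ldots w_{i+1})=S_X(u.w_1\ldots w_i).w_{i+1}$---thereby treating $u.w_1\ldots w_i$ itself as the locality witness at $u'.w_1\ldots w_i$. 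In substance this is the same identification you are after, carried out at the level of single edges rather than whole $t$-disks; phrasing your inductive step that way (propagate the second locality clause along one edge at a time, rather than invoking an abstract rigidity principle) would bring your write-up in line with the paper's.
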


\begin{proof} 
Take such a $u'$ and consider $u$ such that $u'=S_X(u)$. \\
$[\textrm{First }\bullet]$ Since $|u'|>r'+t+1$, we have that for all $v\in L(X)^t_{u'}$, $|u'.v|>r'$. Hence, by $r'$-locality of $L$, there exists $x\in X$ such that $S_X(x)=u'.v$ and such that $L(X)_{u'.v}^0=X_x^0$, i.e. the vertex $v$ in $L(X)^t_{u'}$, in terms of its internal states and edges, is the same as the vertex $x$ in $X$. Now, say there exists $|z|=1$ such that $w=v.z\in L(X)^t_{u'}$, i.e. there is an edge between $v$ and $v.z$ in $L(X)^t_{u'}$. Again since $|S_X(x)|>r'$, the $r'$-locality yields $u'.v.z=S_X(x).z=S_X(x.z)$, i.e. the edge between $v$ and $v.z$  $v$ in $L(X)^t_{u'}$ is the same as that between $x$ and $x.z$ in $X$. Consider $v_1\ldots v_k=v$ with $k\leq t$ and $|v_i|=1$. A similar argument starting from $u'$ and following these edges shows that $x$ is at distance $t$ of $u$ in $X$, and thus $x.z$ is at distance $t+1$ of $u$ in $X$. So the vertices $x$, $x.v$ and their edge do appear in $X_u^t$.\\
$[\textrm{Second }\bullet]$ Again take $w\in X^t_u = L(X)^t_{u'}$. Consider $w_1\ldots w_k=w$ with $k\leq t+1$ and $|w_i|=1$. Since $|u'|>r'+t+1>r'$, the $r'$-locality applies and yields $S_X(u.w_1)=S_X(u).w_1=u'.w_1$. Similarly, since $|u'.w_1\ldots w_i|>r'+t+1-i>r'$, the $r'$-locality applies and yields $S_X(u.w_1\ldots w_i.w_{i+1})=S_X(u.w_1\ldots w_i).w_{i+1}=u'.w_1\ldots w_i.w_{i+1}$. Eventually $S_X(u.w)=u'.w$.\qed
\end{proof}

We may wish to apply a series of local operations at different positions $u_i$, i.e. a circuit. However, applying a local operation may change the graph and hence vertex names, hence some care must be taken:
\begin{definition}[Product]\label{def:product}
Given a shifted dynamics $(L_u,S_{u,\bullet})$ and a dynamics $(M,T_\bullet)$, it is convenient to define their composition so as to take care of renamings 
$$(L_{u} \cdot M)(X)=(L_{T_X(u)})(M(X))$$
and similarly for the $S$ maps:
$$(S_{u,\bullet} \cdot T_{\bullet})_X(v)=(S_{T_X(u),M(X)})(T_{X}(v)).$$
Then,
$$(\prod_{[u_1,\ldots,u_n]} L)(X)=\big((\prod_{[u_2,\ldots,u_n]} L)\cdot L_{u_1}\big)(X)$$
and similarly for the $S$ maps. In order to extend to infinite products, we denote by $u^r$ the restriction of a series $u:\N\rightarrow V(X)$ to the codomain $V(X^r)$, and then take the following limit:
$$(\prod_u L)(X)= \lim_{r\rightarrow\infty} \quad (\prod_{u^r} L) (X),$$
and similarly for the $S$ maps.
When the $L_u$ commute, only the codomain of $u$ matters. In this case, products over sets of vertices are well-defined, e.g. $(\prod_W L)$.
\end{definition}

% \begin{definition}[Product]\label{def:product}
% Given two shifted local dynamics $(L_u,S_{u,\bullet})$ and $(L'_{u'},S'_{u',\bullet})$, it is convenient to define their composition so as to take care of renamings 
% $$(L'_{u'} \cdot L_u)(X)=(L'_{S_X(u')})(L_u(X))$$
% and similarly for the $S$ maps:
% $$(S'_{u',\bullet} \cdot S_{u,\bullet})_X(v)=(S'_{S_{u,X}(u'),L_u(X)})(S_{u,X}(v)).$$
% Then,
% $$(\prod_{[u_1,\ldots,u_n]} L)(X)=\big((\prod_{[u_2,\ldots,u_n]} L)\cdot L_{u_1}\big)(X)$$
% and similarly for the $S$ maps. In order to extend to infinite products, we denote by $u^r$ the restriction of a series $u:\N\rightarrow V(X)$ to the codomain $V(X^r)$, and then take the following limit:
% $$(\prod_u L)(X)= \lim_{r\rightarrow\infty} \quad (\prod_{u^r} L) (X),$$
% and similarly for the $S$ maps.
% When the $L_u$ commute, only the codomain of $u$ matters. In this case, products over sets of vertices are well-defined, e.g. $(\prod_W L)$.
% \end{definition}

\subsection{Block representation}

A famous result about RCA \cite{KariBlock}, is that they admit a finite-depth, reversible circuit form, with gates acting only locally. The result carries through to Quantum CA \cite{ArrighiUCAUSAL}. A crucial step towards this result is to show that conjugating a local operation with an RCGD still yields a local operation:
\begin{proposition}\label{prop:conjloc}
If $(F,R_\bullet)$ is an RCGD and  $(L,S_\bullet)$ is a local dynamics, then $(L',T_\bullet)$ is a local dynamics, with 
\begin{itemize}
\item[$\bullet$] $L'=F^{-1}\circ L\circ F$ and 
\item[$\bullet$] $T_X(u)=R'_{F^{-1}(L(F(X))}(S_{F(X)}(R_X(u)))$,
\end{itemize} where the function $R'_\bullet$ is such that $(F^{-1},R'_\bullet)$ is a CGD. 
\end{proposition}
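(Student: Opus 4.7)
The strategy is to obtain continuity and boundedness of $(L',T_\bullet)$ by composition, and then prove the locality condition by tracing a vertex $u'\in L'(X)$ with $|u'|$ large back through the three stages $F^{-1}$, $L$, $F$ to an appropriate $u\in X$.

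Uniform continuity and boundedness of $L'$ and $T_\bullet$ follow by composition: $F$ and $F^{-1}$ are CGDs (using reversibility from Theorem~\ref{th:main}) and $L$ is local by hypothesis, so each factor is uniformly continuous and bounded. The composite vertex map $T_X=R'_Z\circ S_Y\circ R_X$, where $Y=F(X)$ and $Z=L(Y)$, is continuous by composition, and its inflation is controlled by concatenating the inflation bounds of the three component maps.

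For the locality condition, let $r'$ denote the locality radius of $L$, and fix $r''$ large enough with respect to $r'$, the continuity moduli and boundedness constants of $F$ and $F^{-1}$, and the diameter of the finitely many graphs $X$ with $|V(X)|\le p$, where $p$ is the bound from Theorem~\ref{th:preserv} (so the locality condition is vacuous on those exceptional small graphs). Then, for $u'\in L'(X)$ with $|u'|>r''$, Theorem~\ref{th:preserv} gives that $R'_Z$ is bijective, so pick $z\in Z$ with $R'_Z(z)=u'$; bounded inflation on $(R')^{-1}$ keeps $|z|$ large enough (say $|z|>r'+t+1$ for a suitable $t$) to invoke Lemma~\ref{lem:tothet} for $L$ at $z$. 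This yields $y\in Y$ with $S_Y(y)=z$, $Y^t_y=Z^t_z$, and $S_Y(y.w)=z.w$ for $w\in Y^t_y$. Bijectivity of $R_X$ then produces $u\in X$ with $R_X(u)=y$, and composition gives $T_X(u)=R'_Z(S_Y(R_X(u)))=R'_Z(z)=u'$.

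It remains to verify $L'(X)^0_{u'}=X^0_u$ and $T_X(u.v)=u'.v$ for $v\in X^0_u$. Shift-invariance of $R_X$ and $R'_Z$, combined with the identity-plus-shift behaviour of $S_Y$ from Lemma~\ref{lem:tothet}, yields $T_X(u.v)=R'_Z(S_Y(R_X(u).R_{X_u}(v)))=R'_Z(z.R_{X_u}(v))=R'_Z(z).R'_{Z_z}(R_{X_u}(v))$. The crucial point is that, for $|y|$ sufficiently large, $F^{-1}$ produces the same local output on $Y_y$ and on $Z_z$ because $Y^t_y=Z^t_z$ and $F^{-1}$ is uniformly continuous; hence $R'_{Z_z}$ coincides locally with $R'_{Y_y}=(R_{X_u})^{-1}$ (using that $Y_y=F(X_u)$ by shift-invariance and reversibility), so $R'_{Z_z}(R_{X_u}(v))=v$ and $T_X(u.v)=u'.v$. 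The local-disk equality $L'(X)^0_{u'}=X^0_u$ follows from the resulting chain of local isomorphisms induced by these three shift-invariant vertex maps on the relevant small balls. The main obstacle is the bookkeeping: choosing $r''$ large enough that every intermediate vertex lies in the regime where $F$, $L$, $F^{-1}$ all act as identity-plus-shift, and confirming that $R'_{Z_z}$ and $(R_{X_u})^{-1}$ genuinely agree on the relevant small ball.
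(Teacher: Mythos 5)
Your proposal is correct and follows essentially the same route as the paper: continuity and boundedness by composition, then a three-stage traceback of a far vertex $u'$ through $R'$, $S$, $R$ using bounded inflation, Lemma~\ref{lem:tothet} applied to $L$, uniform continuity of $F$ and $F^{-1}$ transferred through the disk equality $L(F(X))^t_{z}=F(X_u)^t$, and the identification $R'_{F(X_u)}=(R_{X_u})^{-1}$. The only differences are presentational: the paper pins down the locality radius explicitly as $r'=b^{F^{-1}}(r_L+2+\max(r_0,2b_F,r_{2b_F}))$ and spells out the chain of disk equalities giving $L'(X)^0_{u'}=X^0_u$, whereas you leave the choice of $r''$ and that final step at the level of a sketch.
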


\begin{proof} 
Boundedness and uniform continuity by composition. 
Next, suppose:
$L$ is local,  $r_0$ is such that for all $X,Y$ if $X^{r_0}=Y^{r_0}$ then $F^{-1}(X)^0=F^{-1}(Y)^0$ (given by uniform continuity of $F^{-1}$), $r_{2b_F}$ is such that for all $X,Y$ if $X^{r_{2b_F}}=Y^{r_{2b_F}}$ then $F(X)^{2b_F}=F(Y)^{2b_F}$ (given by uniform continuity of $F$), $b_{F^{-1}}$ is the bound given by the bounded inflation lemma applied on $F^{-1}$, $b_L$ is the bound given by the boundedness of $L$ and $r_L$ the radius of locality of $L$. In the  two following points, we chose a radius $r'$ as follow:
$$r'=b^{F^{-1}}(r_L+2+max(r_0,2b_F,r_{2b_F}))$$
Consider $|u'|> r'$.\\
\noindent$[\textrm{First }\bullet]$ Let us show that there exists $u\in X$ such that $L'(X)^0_{u'} = X^0_u$.
 By definition of $F^{-1}$, there exists $w\in LF(X)$ such that $R^{F^{-1}}_{LF(X)}(w)=u'$. By bounded inflation of $F^{-1}$, we have $|w| >r_L$ and thus by locality of $L$, there exists $w'\in F(X)$ such that $S_{F(X)}(w')=w$. Finally by reversibility of $F$ there exists $u\in X$ such that $R^F_X(u)=w$, and thus $u'=T_X(u)$.
Notice that we have that $|S_{F(X)}R^F_{X}(u)|> r_0+r_L+2$. Using lemma \ref{lem:tothet} with $t=r_0$, we have: $LF(X)^{r_0}_{S_{F(x)}R^F_X(u)}=$ $F(X)^{r_0}_{R^F_X(u)}$ $= F(X_u)^{r_0}$. By definition of $r_0$, $F(X_u)^{r_0} = LF(X)^{r_0}_{S_{F(x)}R^F_X(u)}$ implies $X_u^0 = $  $F^{-1}(LF(X)^{r_0}_{S_{F(x)}R^F_X(u)})^0$, which leads by shift-invariance of $F^{-1}$ to $X_u^0 ~=~$ $ F^{-1}(LF(X)^{r_0})_{R^{F^{-1}}_{LF(X)}S_{F(x)}R^F_X(u)}^0$.  Hence $X^0_u=L'(X)^0_{u'}$.
~\\
$[\textrm{Second }\bullet]$
Consider $u$ as above and $v\in X^0_u$.
\begin{align*}
T_X(u.v) & =R^{F'}_{LF(X)}S_{F(x)}(R^F_X(u.v))\\
& =R^{F'}_{LF(X)}S_{F(x)}(R^F_X(u).R^F_{X_u}(v))\textrm{ using shift-invariance of $F$ }\\
& =R^{F'}_{LF(X)}S_{F(x)}(R^F_X(u)).R^F_{X_u}(v) \textrm{ because }|S_{F(X)}R^{F}_X(u)|> r_L +2b^F+2\\
& =R^{F'}_{LF(X)}(S_{F(x)}(R^F_X(u))).R^{F'}_{LF(X)_{S_{F(X)}(R^F_X(u))}}(R^F_{X_u}(v))\textrm{ using shift-invariance of $F^{-1}$ }\\
& =T_X(u).R^{F'}_{LF(X)_{S_{F(X)}(R^F_X(u))}}(R^F_{X_u}(v))
\end{align*}
\noindent We will now show that: $v=R^{F'}_{LF(X)_{S_{F(X)}(R^F_X(u))}}(R^F_{X_u}(v))$.
Since $|R^F_{X_u}(v)| < 2b_F$ by bounded inflation of $F$, it is enough to show: $${R^ {F'}_{LF(X)_{S_{F(X)}(R^F_X(u))}} }^{2b_F}(R^F_{X_u}(v))=v.$$
By definition of $r_{2b_F}$, we have that: if $X^{r_{2b_F}}=Y^{r_{2b_F}}$ then $R^{F'}(X)^{2b_F}=R^{F'}(Y)^{2b_F}$. Let us show that $LF(X)_{S_{F(X)}^{2b_F}(R^F_X(u))}^{r_{2b_F}}= F(X_u)^{r_{2b_F}}$. By applying lemma \ref{lem:tothet} with $t=r_{2b_F}$,
$LF(X)_{S_{F(X)}^{2b_F}(R^F_X(u))}^{r_{2b_F}}=F(X)_{R^F_X(u)}^{r_{2b_F}}$ which, by shift-invariance of $F$, is equal to $F(X_u)^{r_{2b_F}}$. As a consequence,  
$${  R^{F'}_{LF(X)_{S_{F(X)}(R^F_X(u))}}  }
^{2b_F}(R^F_{X_u}(v))
={R^{F'}_{F(X_u)}}^{2b_F}(R^F_{X_u}(v))
=R^{F'}_{F(X_u)}(R^F_{X_u}(v))
=v$$ by definition of $R^{F'}_\bullet$ ~\qed
\end{proof}

Second, we give ourselves a little more space so as to mark which parts of the graph have been updated, or not. 
% \begin{definition}[Marked pointed graph modulo]
% Consider the set of pointed graphs modulo $\mathcal{X}_{\Sigma,\Delta,\pi}$ with labels in $\Sigma$, and ports in $\pi$. Let $\Sigma'=\Sigma\times\{0,1\}$ and $\pi'=\pi \times \{0,1\}$. We define the set of {\em Marked pointed graphs modulo} $\overline{\mathcal{X}}_{\Sigma',\Delta,\pi'}$ to be the subset of ${\mathcal{X}}_{\Sigma',\Delta,\pi'}$ such that:
% \begin{itemize}
% \item for all $u\in X$ with $\sigma_X(u)=(x,a)$ and $\{u\port(i,b),v\port (j,c)\}\in X$, we have $a=c$. 
% \item for all $v\in X$ with $\{u\port(i,b),v\port (j,c)\}\in X$ and $\{u'\port(i',b'),v\port(j,c')\}\in X$, we have $u=u'$. 
% \end{itemize}
% \end{definition}

\begin{definition}[Marked pointed graph modulo]
Consider the set of pointed graphs modulo $\mathcal{X}_{\Sigma,\Delta,\pi}$ with labels in $\Sigma$, and ports in $\pi$. Let $\Sigma'=\Sigma\times\{0,1\}$ and $\pi'=\pi \times \{0,1\}$. We define the set of {\em marked} pointed graphs modulo $\overline{\mathcal{X}}_{\Sigma',\Delta,\pi'}$ to be the subset of ${\mathcal{X}}_{\Sigma',\Delta,\pi'}$ such that for all $u\in X$ having label $(x,a)$, and edge $\{u\port(i,b),v\port (j,c)\}\in X$, we have $a=c$. Given a graph $X\in \mathcal{X}_{\Sigma,\Delta,\pi}$, it is naturally identified with the same graph in $\overline{\mathcal{X}}_{\Sigma',\Delta,\pi'}$ with all marks set to $0$.
\end{definition}

\begin{definition}[Mark operation]
We define the marking operation $\mu(.)$ over labels and ports as toggling the bit in the second component:
\begin{itemize}
\item $\forall (x,a)\in \Sigma', \mu(x,a)=(x,1-a)$
\item $\forall (i,a)\in \pi', \mu(i,a)=(i,1-a)$
\end{itemize}
Then, we define the mark operation $\mu(.)$ over pointed graphs modulo, as attempting to mark the pointed vertex label and its opposite ports, if this will not create conflicts between ports. More formally, given a graph $X$ in $\overline{\mathcal{X}}_{\Sigma',\Delta,\pi'}$, we define the mark operation, $\mu:\overline{\mathcal{X}}_{\Sigma',\Delta,\pi'}\rightarrow \overline{\mathcal{X}}_{\Sigma',\Delta,\pi'}$ as follows:
\begin{itemize}
\item if $\exists v,w\in X ,i,j\in \pi'$ such that $\{\varepsilon\port i ,  v \port j\}\in X$ and $\{ v\port \mu(j) ,w \port k\}\in X$ then $\mu(X)=X$
\item else
\begin{itemize}
\item[$\bullet$] $\sigma_{\mu(X)}(\varepsilon)=\mu(\sigma_{X}(\varepsilon))$
\item[$\bullet$] For all $i,j\in\pi'$,  $\{\varepsilon\port \mu(i),\varepsilon\port \mu(j)\} \in \mu(X)$ if $\{\varepsilon\port i,\varepsilon\port j\} \in X$.
\item[$\bullet$] For all $v\in X$ with $v\neq \varepsilon$ and $i,j\in\pi'$,  $\{\varepsilon\port i,v\port \mu(j)\} \in \mu(X)$ if $\{\varepsilon\port i,v\port j\} \in X$.
\end{itemize}
the rest of the graph $X$ is left unchanged.
\end{itemize}
\end{definition}

Remark: The set $\overline{\mathcal{X}}_{\Sigma',\Delta,\pi'}$ is a compact subset of $\mathcal{X}_{\Sigma',\Delta,\pi'}$.

$\overline{\mathcal{X}}_{\Sigma',\Delta,\pi'}$ allows for a clear distinction between marked and unmarked parts of the graph, allowing to extend any RCGD $F$ over $\mathcal{X}_{\Sigma,\Delta,\pi}$ to act over $\overline{\mathcal{X}}_{\Sigma',\Delta,\pi'}$, as usual on the unmarked part, and trivially on the marked part.

\begin{definition}[Reversible extension]
Let $F:\mathcal{X}_{\Sigma,\Delta,\pi}\rightarrow \mathcal{X}_{\Sigma,\Delta,\pi}$ be an RCGD. We say that $F':\overline{\mathcal{X}}_{\Sigma',\Delta,\pi'}\rightarrow\overline{\mathcal{X}}_{\Sigma',\Delta,\pi'}$ is a {\em reversible extension} of $F$ if and only if $F'$ is an RCGD, and:
\begin{itemize}
\item[$\bullet$] For all $X \in \mathcal{X}_{\Sigma\times\{0\},\pi\times\{0\}}$, $F'(X)=F(X)$.
\item[$\bullet$] For all $X \in \mathcal{X}_{\Sigma\times\{1\},\pi\times\{1\},}$, $F'(X)=X$.
\item[$\bullet$] For all $|V(X)|\leq p$ and $X \notin \mathcal{X}_{\Sigma\times\{0\},\pi\times\{0\}}$, $F'(X)=X$, where $p$ is that of Theorem \ref{th:preserv}.
\end{itemize}
\end{definition}

\begin{proposition}[Reversible extension]\label{prop:revextension}
Suppose $F:\mathcal{X}_{\Sigma,\Delta,\pi}\rightarrow \mathcal{X}_{\Sigma,\Delta,\pi}$ is an RCGD. Then it admits a reversible extension $F':\overline{\mathcal{X}}_{\Sigma',\Delta,\pi'}\rightarrow\overline{\mathcal{X}}_{\Sigma',\Delta,\pi'}$.
\end{proposition}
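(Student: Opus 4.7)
The three bullets of the definition pin down $F'$ on fully unmarked, on fully marked, and on small (non-fully-unmarked) graphs; only large mixed graphs remain. Let $f$ denote a local rule for $F$ of radius $r_F$, obtained from the equivalence of the topological and local descriptions in \cite{ArrighiCayleyNesme}. I would extend $F$ by a local rule $f'$ of radius $R:=\max(r_F,p+1)$ defined as follows: at a vertex $u$, if the entire ball of radius $R$ around $u$ is fully unmarked, apply $f$; otherwise leave $u$ and its immediate neighbourhood unchanged. Since port marks reveal the marks of neighbouring vertices, this ``deeply unmarked'' test is purely local. The accompanying tracking map is $R_X$ on vertices lying in a deeply unmarked region, and the identity elsewhere.

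\textbf{Matching the three required cases.} On a fully unmarked graph every vertex is deeply unmarked, so $f$ is applied everywhere and $F'=F$. On a fully marked graph no vertex is deeply unmarked, so $F'$ is the identity. For a small non-fully-unmarked graph $X$, since $|V(X)|\leq p<R$, any ball of radius $R$ around any vertex of $X$ contains the whole of $X$, and $X$ has at least one marked vertex; hence no vertex is deeply unmarked and $F'(X)=X$, as required.

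\textbf{RCGD axioms.} Shift-invariance is immediate from the uniformity of $f'$; boundedness follows from the boundedness of $F$ together with the bounded-inflation lemma (Lemma \ref{lem:boundedinflation}). For continuity, if $X$ and $Y$ agree on the ball of radius $n$, then whether a given vertex $u$ within radius $n-R$ is deeply unmarked is decided identically in $X$ and in $Y$; applying $f$ or the identity accordingly yields agreement of $F'(X)$ and $F'(Y)$ on a correspondingly smaller ball of the output. For invertibility, define $(F')^{-1}$ symmetrically, using the local rule $f^{-1}$ of $F^{-1}$ provided by Theorem \ref{th:main}: on an output disk apply $f^{-1}$ when the $R$-ball is deeply unmarked, and the identity otherwise. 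A case-by-case check (fully unmarked, fully marked, small mixed, large mixed) gives $F'\circ(F')^{-1}=(F')^{-1}\circ F'=\mathrm{id}$, and the same verification as for $F'$ shows $(F')^{-1}$ is a CGD.

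\textbf{Main obstacle.} The principal difficulty lies in the large mixed case: the radius $R$ must be taken large enough to make the small-mixed-identity clause automatic (whence the choice $R>p$), while the ``deeply unmarked'' predicate must remain stable enough under $f$ that it can be detected in the output and inverted. Here Theorem \ref{th:preserv} is essential: on a large enough graph $F$ is vertex-preserving, so applying $f$ inside a deeply unmarked region only relabels and rewires that region on a fixed vertex set, and the marked region glues back unchanged. Bounded inflation (Lemma \ref{lem:boundedinflation}) then guarantees that this modification stays within a controlled distance of its source, so the mark boundary is not contaminated and the inverse construction is symmetric.
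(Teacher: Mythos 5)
Your construction diverges from the paper's in a way that breaks at the crucial step: invertibility on large mixed graphs. The paper's proof (Appendix~\ref{app:prop:revextension}) does not patch the local rule $f$ vertex-by-vertex; it decomposes $X$ into the connected components of its unmarked part (${}^\downarrow G$) and of its marked part plus boundary (${}^\uparrow G$), applies the \emph{global} map $F$ to each unmarked component as a standalone pointed graph, keeps the marked part identically, and glues. The inverse is then obtained by literally replacing $F$ with $F^{-1}$ component-wise, and $F'\circ (F')^{-1}=\mathrm{id}$ holds because each unmarked component undergoes exactly $F$ followed by exactly $F^{-1}$. Your hybrid map does not have this property: on a large mixed graph, the unmarked component is sent to ``identity on the width-$R$ collar near the mark boundary, $f$-patches deeper in''. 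This is neither $F$ of that component nor the identity, so $f^{-1}$ (the local rule of $F^{-1}$, which expects its input to locally look like $F$ of something) applied to the deep part of your output has no reason to recover the original; the composition of the two symmetric constructions is not the identity, and you give no independent argument that your $F'$ is injective. The appeal to Theorem~\ref{th:preserv} and Lemma~\ref{lem:boundedinflation} does not fill this: vertex-preservation and bounded inflation control where image vertices land, not whether a partially applied local rule can be undone.

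There is a second, related gap: well-definedness of the output. The local-rule presentation of a CGD requires the patches $f(X_u^{r})$ to be pairwise consistent so that their union is a graph. When you apply $f$ at a deeply unmarked vertex $u$ at distance exactly $R$ from the nearest mark, the patch $f(X_u^{r_F})$ prescribes states and edges for images of collar vertices, while those same collar vertices receive identity patches; nothing guarantees these agree, so $\prod$ of your patches may simply fail to be a graph. Your ``deeply unmarked'' predicate is also not obviously stable under the dynamics (distances to the mark boundary can change when $F$ rewires the deep region), which is a further obstruction to detecting in the output where $f^{-1}$ should be applied. The cure is the paper's: treat each maximal unmarked connected region as one object and apply the bijection $F$ to all of it, so that no vertex is ever in a half-updated state. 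Your treatment of the three defining cases (fully unmarked, fully marked, small mixed) is correct, but the proposition also demands that $F'$ be an RCGD, and that is exactly where the argument is missing.
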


\noindent{$[$\bf Proof of Proposition \ref{prop:revextension}$]$}\label{app:prop:revextension}

\noindent{\em Proof outline.} In order to define a reversible extension $F'$, we first separate the graph in two layers: the marked and unmarked layers. We then isolate the connected components in each layer and express them as ``shifted'' graphs modulo. Once this is done, it is easy to define $F'$ as acting as $F$ on the unmarked components and as the identity on the other components.

\noindent\emph{Construction of a reversible extension.}

We first isolate the two layers inside our graphs:

\begin{definition}[Upper and lower projections]
Let $G$ be a graph in $G(\overline{\mathcal{X}}_{\Sigma',\Delta,\pi'})$. We define ${}^\downarrow\! G$ (resp. ${}^\uparrow\! G$) the lower (resp. upper) projection of $G$ as the set of the connected component obtained after removing all marked vertices (resp. all non-marked vertices without used marked ports).
\end{definition}

We then add some structure on those components:

\begin{lemma}[Characterization of connected components]
Given $G$ in $G(\overline{\mathcal{X}}_{\Sigma',\Delta,\pi'})$, the elements of the sets ${}^\downarrow\! G$ and ${}^\uparrow\! G$ are of the form $u.Y$ with $u\in L(X)$ and $Y\in G(\overline{\mathcal{X}}_{\Sigma',\Delta,\pi'})$.
\end{lemma}
\begin{proof}
Consider an element $H$ of ${}^\downarrow\! G$. In particular, it is a graph of $\overline{\mathcal{G}}_{\Sigma',\Delta,\pi'}$. Consider any vertex $u$ of $H$. The graph $\overline{u}.H$ has a vertex having $\varepsilon$ as name. We can now construct the graph modulo $X=\tili{\overline{u}.H} \in \overline{\mathcal{X}}_{\Sigma',\Delta,\pi'}$. By construction this graph verifies $u.G(X)=H$.
By symmetry, the same holds for components in ${}^\uparrow\! G$.\qed

\end{proof}

\begin{proposition}[Reversible extension]%\label{prop:revextension}
Any RCGD  $(F,R_\bullet)$ over $\mathcal{X}_{\Sigma,\Delta,\pi}$ admits a reversible extension $(F',R'_{\bullet})$ over $\overline{\mathcal{X}}_{\Sigma',\Delta,\pi'}$.%\rightarrow\overline{\mathcal{X}}_{\Sigma',\Delta,\pi'}$.
\end{proposition}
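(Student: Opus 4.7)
The plan is to invoke the constructive/topological equivalence from \cite{ArrighiCayleyNesme} to extract a local rule $f$ on disks of some fixed radius $r$ that induces $F$, and then to build a new local rule $f'$ on $\overline{\mathcal{X}}_{\Sigma',\Delta,\pi'}^r$ which combines $f$ applied to the unmarked part of the input with the identity applied to the marked part. Concretely, given a marked disk $X^r$, I would decompose it via the lower and upper projections $({}^\downarrow\! X^r,\,{}^\uparrow\! X^r)$, run $f$ on ${}^\downarrow\! X^r$ to produce the output on the unmarked region, keep ${}^\uparrow\! X^r$ unchanged (identity on the marked region), and glue the two subgraphs back along the ports that were flipped by $\mu$. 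The global dynamics $F'$ is then assembled by union of all outputs $f'(X^r_u)$ over $u\in X$, in the standard CGD fashion from \cite{ArrighiCGD}.

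Next I would verify that $F'$ is a CGD over $\overline{\mathcal{X}}_{\Sigma',\Delta,\pi'}$: shift-invariance and boundedness are inherited from $F$ and from the strict locality of the identity-on-marked half, while continuity follows because $f'$ only reads the disk of radius $r$. The three defining conditions of a reversible extension then hold by construction. On a purely unmarked graph, every disk is unmarked, so $f'$ reduces everywhere to $f$ and hence $F'=F$. On a purely marked graph, every disk is marked, so $f'$ reduces to the identity on each disk and $F'=\mathrm{id}$. On small mixed graphs with $|V(X)|\le p$, where $p$ is the bound from Theorem~\ref{th:preserv}, I would override $F'$ to the identity, a patch that is compatible with the local rule provided $r$ is chosen large enough for those finite configurations to sit inside a single disk and thus be locally recognizable.

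To establish invertibility of $F'$, I would use that Theorem~\ref{th:main} gives an inverse CGD $(F^{-1},S_\bullet)$ which is itself induced by a local rule $g$. The same recipe applied to $g$ yields a local rule $g'$ whose induced dynamics $G'$ is a candidate inverse for $F'$. I would then check that $F'\circ G'$ and $G'\circ F'$ agree with the identity on each of the three regimes: on unmarked disks they act as $f\circ g$ and $g\circ f$, hence as the identity by reversibility of $F$; on marked disks they are compositions of identities; and on small mixed graphs they are the identity by fiat. With $F'$ an invertible CGD, Theorem~\ref{th:main} applied to $F'$ itself supplies the inverse reaction map $R'_\bullet$ and turns $F'$ into a full-fledged RCGD.

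The main obstacle I expect is not the generic large-graph regime, but the treatment of finite mixed-marking graphs of size $\le p$. Forcing $F'=\mathrm{id}$ on these must be reconciled with the local rule, i.e.\ no disk of radius $r$ sitting inside a larger mixed configuration should be confusable with the disk of such a small graph; I would handle this either by enlarging $r$ so that every graph of size $\le p$ fits entirely inside one disk, or by adjoining an auxiliary ``trivial'' flag in $\Sigma'$ that the rule freezes. A secondary technical point is to verify along the way that the union $\bigcup_u f'(X^r_u)$ remains consistent in the sense of Definition~\ref{def:consistency}, particularly along the marked/unmarked interface where the freshly produced edges of $f$ on ${}^\downarrow\! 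X^r_u$ must not collide with the preserved edges of ${}^\uparrow\! X^r_u$; here the use of the flipped ports $\mu(i)$ to carry the marked/unmarked boundary information is precisely what prevents any such clash.
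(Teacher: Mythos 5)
Your proposal follows essentially the same route as the paper: the main text's proof outline constructs exactly the local rule $f'$ you describe ($f$ on disks around unmarked vertices while preserving edges on marked ports, identity around marked vertices, identity by fiat on small mixed graphs of size $\le p$), and obtains the inverse by running the same construction on $F^{-1}$. The appendix merely formalizes this by decomposing the graph globally into its marked and unmarked connected components (the upper and lower projections) and applying $F$ to each unmarked component before taking a consistent union, which is where the interface-consistency point you flag gets absorbed; your disk-by-disk formulation is an equivalent presentation of the same argument.
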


\begin{proof}
Let us construct such a reversible extension $F'$. Let $p$ be that of Theorem \ref{th:preserv}. For all $|V(X)|\leq p$ and $X \notin \mathcal{X}_{\Sigma\times\{0\},\Delta,\pi\times\{0\}}$, we let $F'(X)=X$. The rest supposes $|V(X)|>p$.\\
Given $L:\mathcal{X}_{\Sigma,\Delta,\pi}\rightarrow \mathcal{X}_{\Sigma,\Delta, \pi}$, we define $L^\star:\mathcal{X}_{\Sigma,\Delta, \pi} \rightarrow \mathcal{G}_{\Sigma,\Delta, \pi}$ as the function $G\circ L$. 
Now for all $X\in \overline{\mathcal{X}}_{\Sigma',\Delta,\pi'}$, we define $F'(X)$ as the equivalence class modulo isomorphism of the following graph pointed on $\varepsilon$:
$$ \left[ \bigcup_{C\in {}^\uparrow\! G} C \right]  \cup \left[\bigcup_{u.Y\in {}^\downarrow\! G}  L_0(u.F^\star(P(Y)))  \right] $$
where $G=G(X)$, $L_0: \mathcal{G}_{\Sigma,\Delta, \pi}\rightarrow \overline{\mathcal{G}}_{\Sigma',\Delta,\pi'}$ is the map that adds $0$ in the label of each vertex, and $P:\overline{\mathcal{X}}_{\Sigma',\Delta,\pi'} \rightarrow \mathcal{X}_{\Sigma,\Delta,\pi}$ is maps that forgets about the markings.

Notice that if $G\in \mathcal{G}_{\Sigma\times\{0\},\pi\times\{0\}}$ then ${}^\uparrow\! G$ is empty and ${}^\downarrow\! G$ contains a single connected component $\varepsilon.G$ (the graph itself), thus $F'$ computes $F$. On the other hand, if 
$G \in \mathcal{G}_{\Sigma\times\{1\},\pi\times\{1\}}$ then ${}^\downarrow\! G$ is empty and ${}^\uparrow\! G$ contains $\varepsilon.G$ only, thus $F'$ computes the identity. Hence this $F'$ is a good candidate for being a reversible extension of $F$. It remains now to check that $F'$ is causal, vertex-preserving and reversible.\\
$[$\emph{Causal}$]$ 
Shift-invariance, boundedness and continuity follow directly from the shift-invariance, boundedness and continuity of both $F$ and the identity.
\noindent\\
$[$\emph{Reversible}$]$ Replace $F$ by $F^{-1}$ in the previous definition.\qed
\end{proof}

In order to obtain our circuit-like form for RCGD, we will proceed by reversible, local updates. 

\begin{definition}[Conjugate mark]
Given a reversible extension $F':\overline{\mathcal{X}}_{\Sigma',\Delta,\pi'}\rightarrow\overline{\mathcal{X}}_{\Sigma',\Delta,\pi'}$, we define the {\em conjugate mark} $K:\overline{\mathcal{X}}_{\Sigma',\Delta,\pi'}\rightarrow\overline{\mathcal{X}}_{\Sigma',\Delta,\pi'}$ to be the function:
$$K= F'^{-1} \circ \mu \circ F'.$$
\end{definition}
Notice that by Proposition \ref{prop:conjloc}, the local update blocks are local operations. Moreover, since they are defined as a composition of invertible dynamics, so they are. In order to represent the whole of an RCGD, it suffices to apply these local update blocks at every vertex.
\begin{theorem}[Reversible localizability]\label{th:revloc}
Suppose $F:\mathcal{X}_{\Sigma,\Delta,\pi}\rightarrow \mathcal{X}_{\Sigma,\Delta,\pi}$ is an RCGD. Then for all $X\in \mathcal{X}_{\pi,\Sigma,\Delta}$, we have that:
$$F(X) = (\prod_X \mu)\cdot(\prod_X K)(X)$$ 
where $K=F'^{-1}\circ \mu \circ F'$ for $F':\overline{\mathcal{X}}_{\Sigma',\Delta,\pi'}\rightarrow\overline{\mathcal{X}}_{\Sigma',\Delta,\pi'}$ a reversible extension of $F$.
\end{theorem}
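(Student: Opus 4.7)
The strategy is to use the $K$-blocks to fully mark the image $F(X)$ in the auxiliary marked graph space, and then to use the outer $\prod_X \mu$ to remove those marks, recovering $F(X)$. The key algebraic observation is that the $K_u$'s pairwise commute: $K_u \circ K_v = F'^{-1} \circ \mu_u \circ F' \circ F'^{-1} \circ \mu_v \circ F' = F'^{-1} \circ (\mu_u \circ \mu_v) \circ F'$, and the $\mu_u$'s themselves commute because they affect disjoint label/port bits---the label of $u$ and the far-side port marks at the neighbours of $u$---with each edge-port being flipped only by the $\mu$ applied at the vertex on its opposite side. A telescoping of this identity yields
$$(\prod_X K)(X) \;=\; F'^{-1}\!\big((\prod_X \mu)(F'(X))\big).$$

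For $|V(X)|$ above the bound $p$ of Theorem \ref{th:preserv} applied to $F'$, the correspondence $R'_X$ is bijective, so $\prod_X \mu$ acts at every vertex of $F'(X)=F(X)$ exactly once. This flips every label from $0$ to $1$ and every port mark from $0$ to $1$, producing the fully marked graph $\overline{F(X)}$. Since a reversible extension is by definition the identity on fully marked graphs, so is its inverse, and hence $F'^{-1}(\overline{F(X)})=\overline{F(X)}$, giving $(\prod_X K)(X)=\overline{F(X)}$. Applying the outer $\prod_X \mu$, whose indexing is transported through $\prod_X K$'s vertex correspondence to land on the same vertices of $\overline{F(X)}$, flips every bit back to $0$ by involutivity of $\mu$, and so produces $F(X)$ in its all-zero-mark form, identified with $F(X)\in\mathcal{X}_{\Sigma,\Delta,\pi}$.

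For the finitely many graphs with $|V(X)|\le p$, a direct verification handles the case: by applying Theorem \ref{th:preserv} to both $F$ and its inverse, one gets $|V(F(X))|\le p$ as well, so the partially marked intermediate produced by $\prod_X \mu \circ F'$ still sits in the small-and-not-fully-unmarked regime where the reversible extension is stipulated to be the identity; therefore $F'^{-1}$ acts trivially, and the outer $\prod_X \mu$ cancels exactly the same bit-flips, yielding $F(X)$ again.

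The main obstacle I anticipate is rigorously tracking vertex correspondences through the non-commutative product $\cdot$ of Definition \ref{def:product}, so as to confirm that the indices at which the outer $\prod_X \mu$ fires truly coincide with those touched by the inner one, letting involutivity do its job. A secondary subtlety is the port-conflict precondition of $\mu$: one must check it is never triggered along the chosen sequence of toggles, which follows from the fact that in a single pass of $\prod \mu$, each port on each edge is flipped exactly once (by the $\mu$ at the vertex on its opposite end), so no pair of conflicting edges is ever produced at a common port.
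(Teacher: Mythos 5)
Your proposal follows essentially the same route as the paper's proof: expand $K$ as $F'^{-1}\circ\mu\circ F'$, telescope the product to $F'^{-1}\circ(\prod_X\mu)\circ F'$, split into the cases $|V(X)|>p$ (fully marked graph, on which $F'^{-1}$ is the identity) and $|V(X)|\le p$ (small partially marked graph, handled by the third clause of the reversible-extension definition), and cancel the two $\prod_X\mu$'s by involutivity. Your argument is correct, and in fact slightly more explicit than the paper's on two points it leaves implicit, namely why $|V(F(X))|\le p$ in the small case and why the commutation/telescoping of the $\mu_u$'s is legitimate.
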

\begin{proof} 
 Let us consider a graph $X\in \mathcal{X}_{\Sigma,\Delta,\pi}$. We have:
\begin{align*}
  (\prod_X \mu)\cdot(\prod_X K)(X)
& = (\prod_X \mu)\cdot(\prod_X F'^{-1}\mu F')(X) \\
& = (\prod_X \mu)\cdot\left(F'^{-1}(\prod_X \mu) F'\right)(X) & (1)\\
& = (\prod_X \mu)\cdot(\prod_X \mu ) F(X)& (2)\\
& = F(X).
\end{align*}
The argument to go from $(1)$ to $(2)$ depends on $|X|$:
\begin{itemize}
\item If $|X|\leq p$ (i.e $F$ might change the size of $X$), then all the vertices in $(\prod_X \mu) F'(X)$ are not necessarily marked, but by definition of $F'$ (third $\bullet$), $F'^{-1}$ will perform the identity over this graph.
\item If $|X|>p$, then there is no ambiguity and $(\prod_X \mu) F'(X)$ is simply the graph $F(X)$ where all vertices have been marked. Indeed, since $X$ is in $\mathcal{X}_{\Sigma\times\{0\},\Delta,\pi\times\{0\}}$, so is $F'(X)$, and so $(\prod_X \mu)F'(x)$ is in $\mathcal{X}_{\Sigma\times\{1\},\Delta,\pi\times\{1\}}$.
\end{itemize}

% By theorem \ref{th:preserv}, we know that $F$ preserves the size of any graph $X$ such that $|X|>p$, for a given bound $p$. Let us consider a graph $X\in \mathcal{X}_{\Sigma,\Delta,\pi}$ such that $|X|>p$. We have:
% \begin{align*}
% (\prod _X \mu)\cdot(\prod_X K)(X)
% &=(\prod_X \mu)\cdot(\prod_X F'^{-1}\mu F')(X)\\
% &=(\prod_X \mu)\cdot(F'^{-1}(\prod_X \mu) F')(X)\\
% &=(\prod_X \mu)\cdot(\prod_X \mu )F(X)
% &= F(X).
% \end{align*}

\end{proof}
~\\
Notice that the cases $|X|\leq p$ are finite and $F$ is bijective, thus it just permutes those cases. Thus, this theorem generalizes the block decomposition of reversible cellular automata, which represents any reversible cellular automata as a circuit of finite depth of local permutations. Here, the mark $\mu$ and its conjugate $K$ are the local permutations. The circuit is again of finite depth, a vertex $u$ will be attained by all those $K$ that act over $X_u^{r'}$, where $r'$ is the locality radius of $K$. Therefore, the depth is less than $|\pi|^{r'}$. An example of such a decomposition is described in Figure \ref{fig:movingheadblock}.

\begin{figure}[h!]
\begin{center}
\begin{tabular}{ccc}
(1)\includegraphics[scale=0.26]{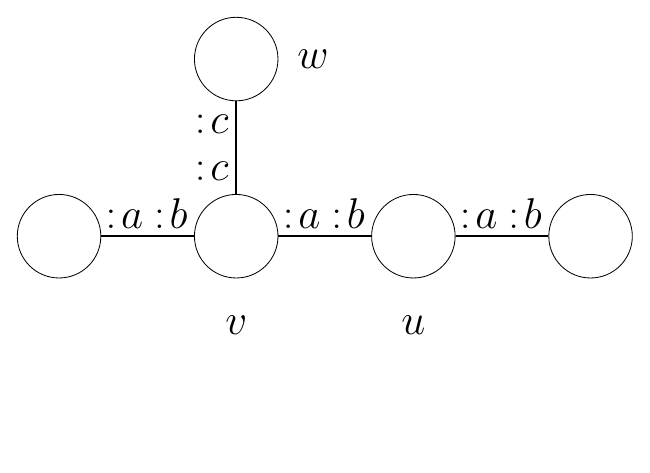} &
(2)\includegraphics[scale=0.26]{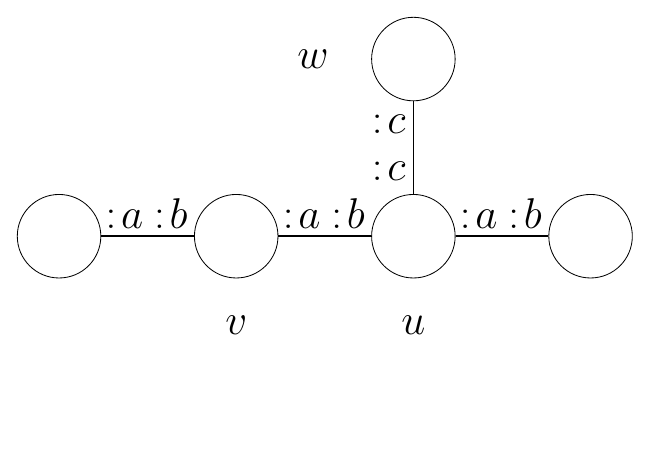} &
(3)\includegraphics[scale=0.26]{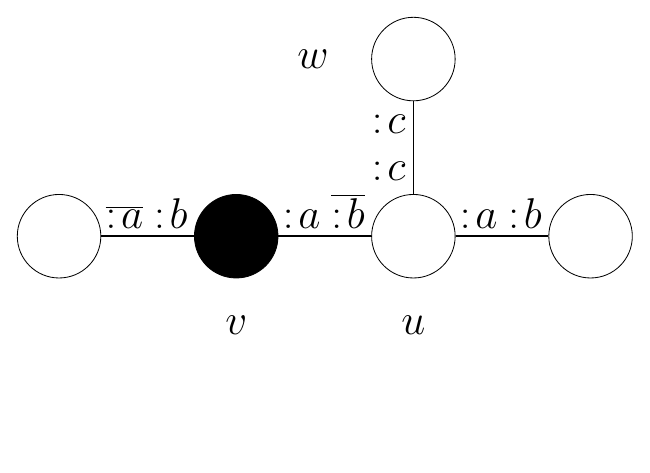} \\
(4)\includegraphics[scale=0.26]{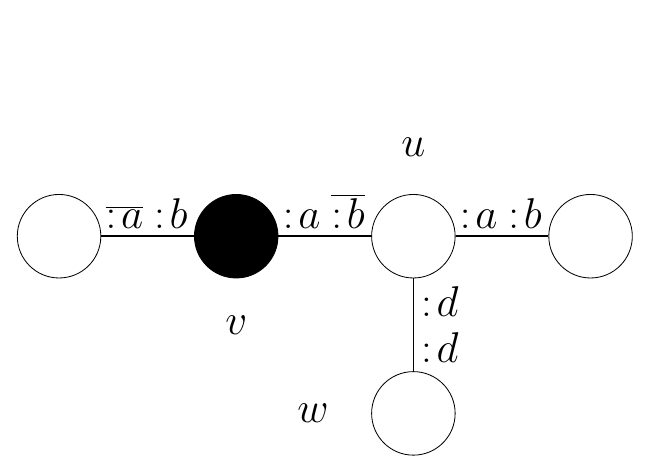} &
(5)\includegraphics[scale=0.26]{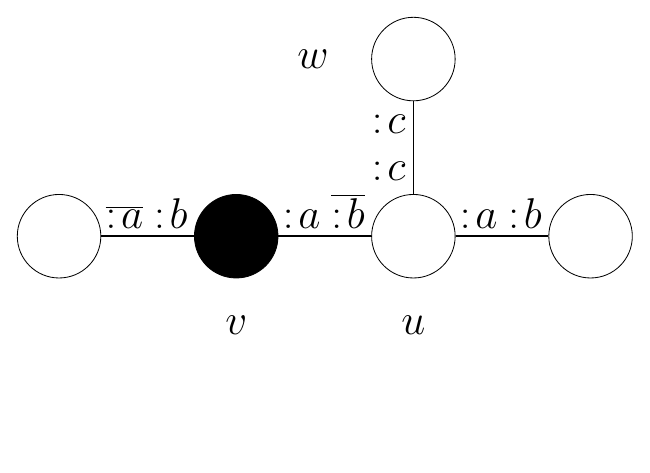} &
(6)\includegraphics[scale=0.26]{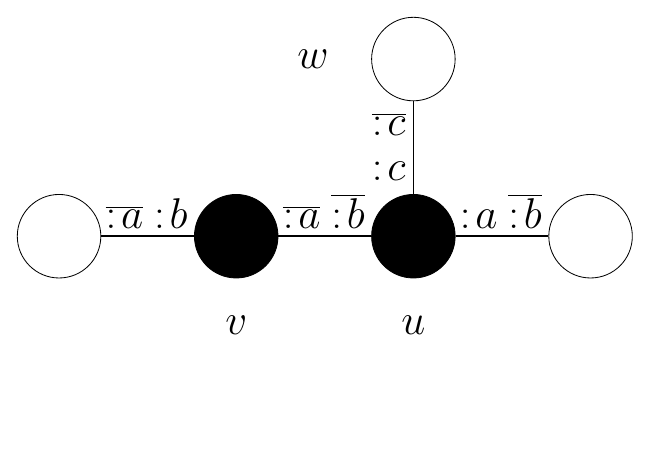} \\
(7)\includegraphics[scale=0.26]{Pictures/exemple_step12_bw.pdf} &
(8)\includegraphics[scale=0.26]{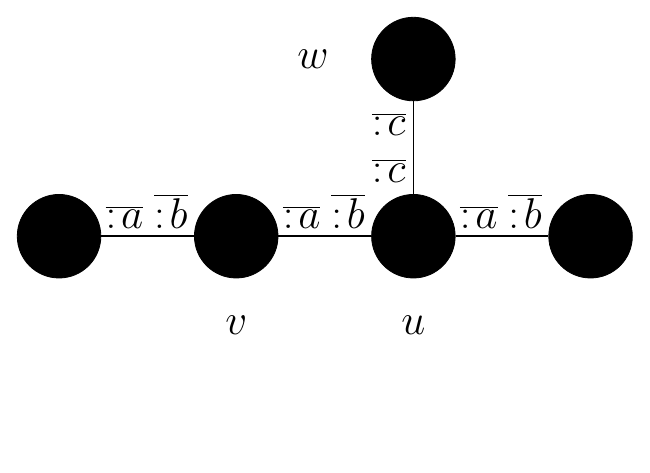} &
(9)\includegraphics[scale=0.26]{Pictures/exemple_step3.pdf} \\
\end{tabular}
\end{center}
\caption{\label{fig:movingheadblock}Block representation of the moving head dynamics. $(1)$ Initially, no vertices are marked. $(2)$ to $(4)$ Application of $K_v$. First $F$ is applied, then $v$ is marked, followed by the application of $F^{-1}$. $(5)$ to $(7)$ Application of $K_u$. $(8)$ The graph once every $K$ have been applied. The vertices just need to be unmarked by the $\mu$'s. $(9)$ Altogether this implements one time step of F.}

\end{figure}

\section{Conclusion}

{\em Summary of results.} We have studied Reversible Causal Graph Dynamics, thereby extending Reversible Cellular Automata results to time-varying, pointed graphs modulo. Pointed graphs modulo are arbitrary bounded-degree networks, with a pointed vertex serving as the origin, and modulo renaming of vertices. Some of these graphs have shift-equivalent vertices. We have shown that if a Causal Graph Dynamics (CGD) is invertible, then it preserves shift-equivalence classes. This in turn entails almost-vertex-preservation, i.e. the conservation of each vertex for big enough graphs. Next, we have shown that the inverse of a CGD is a CGD.
Finally, we have proved that Reversible Causal Graph Dynamics can be represented as finite-depth circuits of local reversible gates.

{\em Future work.} %This is a generalization of theoretical Computer Science results \cite{Hedlund,KariRevUndec}. From a mathematical point of view, however, they show that invertibility is a very strong constraint on space-varying dynamics:
We have shown that invertible causal graph dynamics implies almost vertex-preservation or, in other words, that beyond some finitary cases, information conservation implies conservation of the systems that support this information. The result could perhaps be understood as a ``Matter conservation theorem'', \`a la Lavoisier. Still, this cannot forbid that some `dark matter' which was there at all times, could now be made `visible'. We plan to follow this idea in a subsequent work. We also wish to explore the quantum regime of these models, as similar results where given for Quantum Cellular Automata over fixed graphs \cite{ArrighiUCAUSAL}. Such results would be of interest to theoretical physics, in the sense of discrete time versions of \cite{QuantumGraphity1}.

\section*{Acknowledgements} This work has been funded by the ANR-12-BS02-007-01 TARMAC grant, the ANR-10-JCJC-0208 CausaQ grant, and the John Templeton Foundation, grant ID 15619. The authors acknowledge enlightening discussions with Bruno Martin and Emmanuel Jeandel. This work has been partially done when PA was delegated at Inria Nancy Grand Est, in the project team Carte. 

\bibliography{biblio}
\bibliographystyle{plain}
%/Users/perdrix/Dropbox/bibliography (1)/biblio.bib

\appendix

\section{Formalism}\label{app:formalism}%SECTION 1

This appendix provides formal definitions of the kinds of graphs we are using, together with the operations we perform upon them. None of this is specicific to the reversible case; it can all be found in \cite{ArrighiCayleyNesme} and is reproduced here only for convenience.

\subsection{Graphs}\label{app:graphs}%SECTION 1

\noindent {\em Vertex names.} Let $\pi$ be a finite set, $\Pi=\pi^2$, and $V={\cal P}(\Pi^*)$, where `$.$' represents concatenation of words and $\varepsilon$ is the empty word, as usual. Each vertex of a graphs non-modulo will be uniquely identified with a name $u$ in $V$. This particular choice of the universe of names is irrelevant until Definition \ref{def:associatedgraph}, when it becomes natural.

\begin{definition}[Graph non-modulo]\label{def:graphs}
A {\em graph non-modulo} $G$ is given by 
\begin{itemize}
\item[$\bullet$] An at most countable subset $\vt(G)$ of $V$, whose elements are called {\em vertices}.
\item[$\bullet$] A finite set $\ports$, whose elements are called {\em ports}.
\item[$\bullet$] A set $E(G)$ of non-intersecting two element subsets of $\vt(G)\port\ports$, whose elements are called edges. In other words an edge $e$ is of the form $\{u \port a, v \port b\}$, and $\forall e,e'\in E(G), e\cap e'\neq \emptyset \Rightarrow e=e'$. 
\end{itemize}
The graph is assumed to be connected: for any two $u,v\in V(G)$, there exists $v_0,\ldots , v_{n}\in V(G)$, $a_0,b_0\ldots , a_{n-1},b_{n-1}\in \pi$ such that for all $i\in\{0\ldots n-1\}$, one has $\{v_i\port a_i,v_{i+1}\port b_i\}\in E(G)$ with $v_0=u$ and $v_n=v$.
\end{definition}

\begin{definition}[Labelled graph non-modulo]
A labelled graph non-modulo is a triple $(G,\sigma,\delta)$, also denoted simply $G$ when it is unambiguous, where $G$ is a graph, and $\sigma$ and $\delta$ respectively label the vertices and the edges of $G$:
\begin{itemize}
\item[$\bullet$] $\sigma$ is a partial function from $V(G)$ to a finite set $\Sigma$;
\item[$\bullet$] $\delta$ is a partial function from $E(G)$ to a finite set $\Delta$.
\end{itemize}
The {\em set of all graphs} with ports $\ports$ is written ${\cal G}_{\ports}$.  
The {\em set of labelled graphs} with states $\Sigma,\Delta$ and ports $\ports$ is written ${\cal G}_{\Sigma,\Delta,\ports}$.
To ease notations, we sometimes write $v \in G$ for $v \in \vt(G)$.
\end{definition}

We single out a vertex as the origin:
\begin{definition}[Pointed graph non-modulo]\label{def:pointedgraph}
A {\em pointed (labelled) graph} is a pair $(G,p)$ with $p\in G$. 
The {\em set of pointed graphs}  with ports $\ports$  is written ${\cal P}_{\ports}$.
The {\em set of pointed labelled graphs} with states $\Sigma,\Delta$ and ports $\ports$ is written ${\cal P}_{\Sigma,\Delta,\ports}$.
\end{definition}

Here is when graph differ only up to renaming:
\begin{definition}[Isomorphism]\label{def:isomorphism}
An {\em isomorphism} $R$ is a function from ${\cal G}_{\pi}$ to ${\cal G}_{\pi}$ which is specified by a bijection $R(.)$ from $V$ to $V$. 
The image of a graph $G$ under the isomorphism $R$ is a graph $RG$ whose set of vertices is $R(\vt(G))$, and whose set of edges is $\{\{R(u):a,R(v):b\} \;|\; \{u:a,v:b\}\in E(G) \}$. 
Similarly, the image of a pointed graph $P=(G,p)$ is the pointed graph $RP=(RG,R(p))$. 
When $P$ and $Q$ are isomorphic we write $P\approx Q$, defining an equivalence relation on the set of pointed graphs. The definition extends to pointed labelled graphs.
\end{definition}
\noindent (Pointed graph isomorphism rename the pointer in the same way as it renames the vertex upon which it points; which effectively means that the pointer does not move.)

\noindent Our main objects of study are pointed graphs modulo isomorphism. 
\begin{definition}[pointed graphs modulo]\label{def:pointedmodulo}
Let $P$ be a pointed (labelled) graph $(G,p)$. The {\em pointed graph modulo} $X$ is  $\widetilde{P}$ the equivalence class of $P$ with respect to the equivalence relation $\approx$. The {\em set of pointed graphs modulo} with ports $\ports$ is written ${\cal X}_{\ports}$. The {\em set of labelled pointed Graphs modulo} with states $\Sigma,\Delta$ and ports $\ports$ is written ${\cal X}_{\Sigma,\Delta,\ports}$.
\end{definition}

\subsection{Paths and vertices}\label{app:paths}%SECTION 2

When we are considering pointed graphs modulo isomorphism, vertices no longer have a unique identifier. Still they can be designated by a sequence of ports in $(\pi^2)^*$ that leads, from the origin, to this vertex.

\begin{definition}[Path]\label{def:path}
Given a pointed graph modulo $X$, we say that $\alpha\in\Pi^*$ is a path of $X$ if and only if there is a finite sequence $\alpha=(a_i b_i)_{i\in\{0,...,n-1\}}$ of ports such that, starting from the pointer, it is possible to travel in the graph according to this sequence. More formally, $\alpha$ is a path if and only if there exists $(G,p)\in X$ and there also exists $v_0,\ldots , v_{n}\in V(G)$ such that for all $i\in\{0\ldots n-1\}$, one has $\{v_i\port a_i,v_{i+1}\port b_i\}\in E(G)$, with $v_0=p$ and $\alpha_i=a_ib_i$. Notice that the existence of a path does not depend on the choice of $(G,p)\in X$. The set of paths of $X$ is denoted by $L(X)$. 
\end{definition}
Notice that paths can be seen as words on the alphabet $\Pi$ and thus come with a natural operation `$.$' of concatenation, a unit $\varepsilon$ denoting the empty path, and a notion of inverse path $\overline{\alpha}$ which stands for the path $\alpha$ read backwards. Two paths are equivalent if they lead to same vertex:
\begin{definition}[Equivalence of paths]
Given a pointed graph modulo $X$, we define the {\em equivalence of paths} relation $\equiv_{X}$ on $L(X)$ such that for all paths $\alpha,\alpha'\in L(X)$, $\alpha\equiv_{X} \alpha'$ if and only if, starting from the pointer, $\alpha$ and $\alpha'$ lead to the same vertex of $X$. 
More formally, $\alpha\equiv_{X}\alpha'$ if and only if there exists $(G,p)\in X$ and $v_1,\ldots , v_{n},v'_1,\ldots , v'_{n'}\in V(G)$ such that for all $i\in\{0\ldots n-1\}$, $i'\in\{0\ldots n'-1\}$, one has $\{v_i\port a_i,v_{i+1}\port b_i\}\in E(G)$, $\{v'_{i'}\port a'_{i'},v'_{i'+1}\port b'_{i'}\}\in E(G)$, with  $v_0=p$, $v'_0=p$, $\alpha=(a_ib_i)_{i\in\{0,...,n-1\}}$, $\alpha'=(a'_{i'}b'_{i'})_{i\in\{0,...,n'-1\}}$ and $v_{n}=v_{n'}$.
We write $\hat{\alpha}$ for the equivalence class of $\alpha$ with respect to $\equiv_X$.
\end{definition}

It is often useful to undo the modulo, i.e. to obtain a canonical instance $(G(X),\varepsilon)$ of the equivalence class $X$.
\begin{definition}[Associated graph]\label{def:associatedgraph}
Let $X$ be a pointed graph modulo. Let $G(X)$ be the graph such that:
\begin{itemize}
\item[$\bullet$] The set of vertices $V(G(X))$ is the set of equivalence classes of $L(X)$;
\item[$\bullet$] The edge $\{\hat{\alpha}\port a,\hat{\beta}\port b\}$ is in $E(G(X))$ if and only if $\alpha.ab \in L$ and $\alpha.ab\equiv_X \beta$, for all $\alpha\in \hat{\alpha}$ and $\beta\in \hat{\beta}$.
\end{itemize}
We define the {\em associated graph} to be $G(X)$.
\end{definition}

\noindent {\em Conventions.} The following are three presentations of the same mathematical object:
\begin{itemize}
\item[$\bullet$] a graph modulo $X$,
\item[$\bullet$] its associated graph $G(X)$
\item[$\bullet$] the algebraic structure $\langle L(X),\equiv_X\rangle$
\end{itemize}
Each vertex of this mathematical object can thus be designated by
\begin{itemize}
\item[$\bullet$]  $\hat{\alpha}$ an equivalence class of $L(X)$, i.e. the set of all paths leading to this vertex starting from $\hat{\varepsilon}$,
\item[$\bullet$] or more directly by $\alpha$ an element of an equivalence class $\hat{\alpha}$ of $X$, i.e. a particular path leading to this vertex starting from $\varepsilon$.
\end{itemize}
These two remarks lead to the following mathematical conventions, which we adopt for convenience. In the paper:
\begin{itemize}
\item[$\bullet$] $\hat{\alpha}$ and $\alpha$ are no longer distinguished unless otherwise specified. The latter notation is given the meaning of the former. We speak of a ``vertex'' $\alpha$ in $V(X)$ (or simply $\alpha\in X$).
\item[$\bullet$] It follows that `$\equiv_X$' and `$=$' are no longer distinguished unless otherwise specified. The latter notation is given the meaning of the former. I.e. we speak of ``equality of vertices'' $\alpha=\beta$ (when strictly speaking we just have $\hat{\alpha}=\hat{\beta}$).
\end{itemize}

\subsection{Operations ove pointed Graphs modulo}\label{app:operationsmodulo}

\noindent {\em Subdisks}. For a pointed graph $(G,p)$ non-modulo:
\begin{itemize}
\item[$\bullet$] the neighbours of radius $r$ are just those vertices which can be reached in $r$ steps starting from the pointer $p$;
\item[$\bullet$] the disk of radius $r$, written $G^r_p$, is the subgraph induced by the neighbours of radius $r+1$, with labellings restricted to the neighbours of radius $r$ and the edges between them, and pointed at $p$.
\end{itemize}
For a graph modulo, on the other hand, the analogous operation is:
\begin{definition}[Disk]
Let $X\in {\cal X}_{\Sigma,\Delta,\ports}$ be a pointed graph modulo and $G$ its associated graph. 
Let $X^r$ be $\widetilde{G^r_\varepsilon}$.
The graph modulo $X^r\in {\cal X}_{\Sigma,\Delta,\ports}$ is referred to as the {\em disk of radius $r$} of $X$. The {\em set of disks of radius $r$} with states $\Sigma,\Delta$ and ports $\ports$ is written ${\cal X}^r_{\Sigma,\Delta,\ports}$.
\end{definition}

\begin{definition}[Size]\label{def:size}
Let $X\in {\cal X}_{\Sigma,\Delta,\ports}$ be a pointed graph modulo. We say that a vertex $u\in X$ has size less or equal to $r+1$, and write $|u|\leq r+1$, if and only if $u\in X^r$.  
\end{definition}

\noindent {\em Shifts} are a notation for the graph where vertices are named relatively to some other pointer vertex $u$.
\begin{definition}[Shift]\label{def:shift}
Let $X\in {\cal X}_{\Sigma,\Delta,\ports}$ be a pointed graph modulo and $G$ its associated graph. 
Consider $u\in X$ or $X^r$ for some $r$, and consider the pointed graph $(G,u)$, which is the same as $(G,\varepsilon)$ but with a different pointer. Let $X_u$ be $\tili{(G,u)}$. The pointed graph modulo $X_u$ is referred to as {\em $X$ shifted by $u$}.\\ 
\end{definition}

\subsection{Operations over pointed Graphs non-modulo}\label{app:operationsnonmodulo}%SECTION 2

\begin{definition}[Shift isomorphism]\label{def:shiftiso}
Let $X\in {\cal X}_{\ports}$ be a pointed graph modulo.
Let $G\in {\cal G}_{\ports}$ be a graph that has vertices that are disjoint subsets of $V(X)$ or $V(X^r)$ for some $r$. 
Consider $u \in X$. 
Let $R$ be the isomorphism from $V(X)$ to $V(X_u)$ mapping $v\mapsto \overline{u}.v$, for any $v\in V(X)$ or $V(X^r)$. Extend this bijection pointwise to act over subsets of $V(X)$, and let $\overline{u}.G$ to be $RG$.
The graph $\overline{u}.G$ has vertices that are disjoint subsets of $V(X_u)$, it is referred to as {\em $G$ shifted by $u$}.
The definition extends to labelled graphs.
\end{definition}

\noindent We need the standard \cite{BFHAmalgamation,LoweAlgebraic} notion of {\em union} of graphs, and for this purpose we need a notion of {\em consistency} between the operands of the union: 
\begin{definition}[Consistency]\label{def:consistency}
Let $X\in {\cal X}_{\ports}$ be a pointed graph modulo. Let $G$ be a labelled graph $(G,\sigma, \delta)$, and $G'$ be a labelled graph $(G',\sigma', \delta')$, each one having vertices that are pairwise disjoint subsets of $V(X)$.  The graphs are said to be {\em consistent} if and only if:
\begin{itemize}
\item[(i)] $\forall x\in G\,\forall x'\in G'\quad x\cap x'\neq\emptyset \Rightarrow x=x'$,
\item[(ii)] $\forall x,y\in G\,\forall x',y'\in G'\,\forall a,a',b,b' \in \ports$\\ 
$\displaystyle{(\{x\port a,y\port b\} \in E(G) \wedge \{x'\port a',y'\port b'\} \in E(G') \wedge x=x' \wedge a=a')}$\\${\quad\Rightarrow (b=b' \wedge y=y')}$,
\item[(iii)] $\forall x,y\in G\,\forall x',y'\in G'\,\forall a,b \in \ports \quad x=x' \Rightarrow \delta(\{x\port a,y\port b\})=\delta'(\{x'\port a,y'\port b\})$ when both are defined,
\item[(iv)] $\forall x\in G\,\forall x'\in G'\quad  x=x' \Rightarrow \sigma(x)=\sigma'(x')$ when both are defined.
\end{itemize}
They are said to be {\em trivially consistent} if and only if for all $x\in G$, $x'\in G'$ we have $x\cap x'=\emptyset$.
\end{definition}
(The consistency conditions aim at making sure that both graphs ``do not disagree''. Indeed: $(iv)$ means that ``if $G$ says that vertex $x$ has label $\sigma(x)$, $G'$ should either agree or have no label for $x$''; $(iii)$ means that ``if $G$ says that edge $e$ has label $\delta(e)$, $G'$ should either agree or have no label for $e$''; $(ii)$ means that ``if $G$ says that starting from vertex $x$ and following port $a$ leads to $y$ via port $b$, $G'$ should either agree or have no edge on port $x\port a$''. Condition $(i)$ is in the same spirit: it requires that $G$ and $G'$, if they have a vertex in common, then they must fully agree on its name. Remember that vertices of $G$ and $G'$ are disjoint subsets of $V(X).S$. If one wishes to take the union of $G$ and $G'$, one has to enforce that the vertex names are still disjoint subsets of $V(X).S$. Trivial consistency arises when $G$ and $G'$ have no vertex in common: thus, they cannot disagree on any of the above.)

\begin{definition}[Union]\label{def:union}
Let $X\in {\cal X}_{\ports}$ be a pointed graph modulo. Let $G$ be a labelled graph $(G,\sigma, \delta)$, and $G'$ be a labelled graph $(G',\sigma', \delta')$, each one having vertices that are pairwise disjoint subsets of $V(X).S$.
Whenever they are consistent, their {\em union} is defined. The resulting graph $G\cup G'$ is the labelled graph with vertices $V(G)\cup V(G')$, edges $E(G)\cup E(G')$, labels that are the union of the labels of $G$ and $G'$.
\end{definition}

\end{document}